\pgfplotsset{compat=newest}
\theoremstyle{definition}
\newtheorem{theorem}              {Theorem}
\newtheorem{lemma}      [theorem] {Lemma}
\newtheorem{proposition}[theorem] {Proposition}
\newtheorem{definition} [theorem] {Definition}
\newtheorem{example}    [theorem] {Example}
\numberwithin{equation}{section}
\numberwithin{figure}{section}
\numberwithin{table}{section}
\newcommand{\floor}[1]{\lfloor #1 \rfloor}
\newcommand{\argmax}{\operatornamewithlimits{argmax}}
\newcommand{\rank}[1]{\mathrm{rank}(#1)}
\newcommand{\adm}[1]{\mathrm{adm}(#1)}
\newcommand{\cl}[1]{\mathrm{cl}(#1)}
\newcommand{\COMM}[2]{{
\begin{CJK}{UTF8}{ipxm}
\ifthenelse{\equal{#1}{SN}}{\color{blue}}{
\ifthenelse{\equal{#1}{TM}}{\color{red}}{
\ifthenelse{\equal{#1}{AA}}{\color{cyan}}{
\ifthenelse{\equal{#1}{BB}}{\color{magenta}}}}}
[#1: #2]
\end{CJK}
}}
\title{Subspace Selection via DR-Submodular Maximization on Lattices}
\author{
  So Nakashima  \\
  University of Tokyo \\
  RIKEN Center for Advanced Intelligence Project \\
  \texttt{so\_nakashima@mist.i.u-tokyo.ac.jp } \\
  \and 
  Takanori Maehara \\ 
  RIKEN Center for Advanced Intelligence Project \\
  \texttt{takanori.maehara@riken.jp}
}
\author{
  So Nakashima  \\
  University of Tokyo \\
  RIKEN Center for Advanced Intelligence Project \\
  \texttt{so\_nakashima@mist.i.u-tokyo.ac.jp } \\
  \And 
  Takanori Maehara \\ 
  RIKEN Center for Advanced Intelligence Project \\
  \texttt{takanori.maehara@riken.jp}
}
\begin{document}

\maketitle

\begin{abstract}
The subspace selection problem seeks a subspace that maximizes an objective function under some constraint. 
This problem includes several important machine learning problems such as the principal component analysis and sparse dictionary selection problem.
Often, these problems can be solved by greedy algorithms. 
Here, we are interested in why these problems can be solved by greedy algorithms, and what classes of objective functions and constraints admit this property.
To answer this question, we formulate the problems as optimization problems on \emph{lattices}. 
Then, we introduce a new class of functions, \emph{directional DR-submodular functions}, to characterize the approximability of problems.
We see that the principal component analysis, sparse dictionary selection problem, and these generalizations have directional DR-submodularities.
We show that, under several constraints, the directional DR-submodular function maximization problem can be solved efficiently with provable approximation factors.
\end{abstract}

\section{Introduction}
\label{sec:introduction}

\paragraph{Background and motivation}

\emph{The subspace selection problem} involves seeking a good subspace from data.
Mathematically, the problem is formulated as follows.
Let $\mathcal{L}$ be a family of subspaces of $\mathbb{R}^d$, $\mathcal{F} \subseteq \mathcal{L}$ be a set of feasible subspaces, and $f \colon \mathcal{L} \to \mathbb{R}$ be an objective function.
Then, the task is to solve the following optimization problem.
\begin{align}
\label{eq:problem}
\begin{array}{ll}
\text{maximize} & f(X) \\
\text{subject to} & X \in \mathcal{F}.
\end{array}
\end{align}
This problem is a kind of feature selection problem, and contains several important machine learning problems such as the principal component analysis and sparse dictionary selection problem.

In general, the subspace selection problem is a non-convex continuous optimization problem; hence it is hopeless to obtain a provable approximate solution.
On the other hand, such solution can be obtained efficiently in some special cases.
The most important example is the principal component analysis. 
Let $\mathcal{L}(\mathbb{R}^d)$ be the set of all the subspaces of $\mathbb{R}^d$, 
$\mathcal{F}$ be the subspaces with dimension of at most $k$, and $f \colon \mathcal{L} \to \mathbb{R}$ be the function defined by
\begin{align}
\label{eq:pca}
	f(X) = \sum_{i \in I} \| \Pi_X u_i \|^2 
\end{align}
where $\{ u_i \}_{i \in I} \subset \mathbb{R}^d$ is the given data and $\Pi_X$ is the projection to subspace $X$.
Then, problem~\eqref{eq:problem} with these $\mathcal{L}(\mathbb{R}^d)$, $\mathcal{F}$, and $f$ defines the principal component analysis problem.
As we know, the greedy algorithm, which iteratively selects a new direction $a_i \in \mathbb{R}^d$ that maximizes the objective function, gives the optimal solution to problem~\eqref{eq:problem}.
Another important problem is the sparse dictionary selection problem.
Let $V \subseteq \mathbb{R}^d$ be a set of vectors, called a dictionary. 
For a subset $S \subseteq V$, we denote by $\mathrm{span}(S)$ the subspace spanned by $S$.
Let $\mathcal{L}(V) = \{ \mathrm{span}(S) : S \subseteq V \}$ be the subspaces spanned by a subset of $V$, and $\mathcal{F}$ be the subspaces spanned by at most $k$ vectors of $V$. 
Then, the problem~\eqref{eq:problem} with these $\mathcal{L}(V)$, $\mathcal{F}$, and $f$ in \eqref{eq:pca} defines the sparse dictionary selection problem.
The problem is in general difficult to solve~\cite{natarajan1995sparse}; however, the greedy-type algorithms, e.g., \emph{orthogonal matching pursuit}, yield provable approximation guarantees depending on the mutual coherence of $V$.

Here, we are interested in the following research question: Why the principal component analysis and the sparse dictionary selection problem can be solved by the greedy algorithms, and what classes of objective functions and constraints have the same property?

\paragraph{Existing approach}

Several researchers have considered this research question (see Related work below). 
One successful approach is employing \emph{submodularity}. 
Let $V \subseteq \mathbb{R}^d$ be a (possibly infinite) set of vectors. 
We define $F \colon 2^V \to \mathbb{R}$ by $F(S) = f(\mathrm{span}(S))$.
If this function satisfies the submodularity, $F(S) + F(T) \ge F(S \cup T) + F(S \cap T)$, or some its approximation variants, we obtain a provable approximation guarantee of the greedy algorithm~\cite{krause2010submodular,das2011submodular,elenberg2016restricted,khanna2017approximation}. 

However, this approach has a crucial issue that it cannot capture the structure of vector spaces.
Consider three vectors $a = (1,0)$, $b = (1/\sqrt{2}, 1/\sqrt{2})$, and $c = (0,1)$ in $\mathbb{R}^2$.
Then, we have 
$\mathrm{span}(\{a, b\}) = \mathrm{span}(\{b, c\}) = \mathrm{span}(\{c, a\})$; therefore, $F(\{a, b\}) = F(\{b, c\}) = F(\{c, a\})$. 
However, this property (a single subspace is spanned by different bases) is overlooked in the existing approach, which yields underestimation of the approximation factors of the greedy algorithms (see Section~\ref{sec:sparse}).

\paragraph{Our approach}

In this study, we employ \emph{Lattice Theory} to capture the structure of vector spaces. 
A \emph{lattice} $\mathcal{L}$ is a partially ordered set closed under the greatest lower bound (aka., meet, $\land$) and the least upper bound (aka., join, $\lor$).

The family of all subspaces of $\mathbb{R}^d$ is called \emph{the vector lattice} $\mathcal{L}(\mathbb{R}^d)$, which forms a lattice whose meet and join operators correspond to the intersection and direct sum of subspaces, respectively.
This lattice can capture the structure of vector spaces as mentioned above.
Also, the family of subspaces $\mathcal{L}(V)$ spanned by a subset of $V \subseteq \mathbb{R}^d$ forms a lattice.

We want to establish a submodular maximization theory on lattice.
Here, the main difficulty is a ``nice'' definition of submodularity.
Usually, the \emph{lattice submodularity} is defined by the following inequality~\cite{topkis1978minimizing}, which is a natural generalization of set submodularity.
\begin{align}
\label{eq:submodular}
	f(X) + f(Y) \ge f(X \land Y) + f(X \lor Y).
\end{align}
However, this is too strong that it cannot capture the principal component analysis as shown below.
\begin{example}
\label{ex:critical}
Consider the vector lattice $\mathcal{L}(\mathbb{R}^2)$.
Let $X = \mathrm{span} \{ (1, 0) \}$ and $Y = \mathrm{span} \{ (1, \epsilon) \}$ be subspaces of $\mathbb{R}^2$ where $\epsilon > 0$ is sufficiently small.
Let $\{ v_i \}_{i \in I} = \{ (0, 1) \}$ be the given data.
Then, function \eqref{eq:pca} satisfies $f(X) = 0$, $f(Y) = \epsilon/\sqrt{1 + \epsilon^2}$, $f(X \land Y) = 0$, and $f(X \lor Y) = 1$. 
Therefore, it does not satisfy the lattice submodularity. 
A more important point is that, since we can take $\epsilon \to 0$, there is no constants $\alpha > 0$ and $\delta \ll f(X) + f(Y)$ such that $f(X) + f(Y) \ge \alpha( f(X \land Y) + f(X \lor Y) ) - \delta$ on this lattice.
%and no constant $\delta \ll f(X)$ such that
This means that it is very difficult to formulate this function as an approximated version of a lattice submodular function.
\end{example}
Another commonly used submodularity is the \emph{diminishing return (DR)-submodularity}~\cite{soma2015generalization,bian2016guaranteed,soma2017non}, which is originally introduced on the integer lattice $\mathbb{Z}^V$.
A function $f \colon \mathbb{Z}^V \to \mathbb{R}$ is DR-submodular if 
\begin{align}
\label{eq:drsubmodular}
	f(X + e_i) - f(X) \ge f(Y + e_i) - f(Y)
\end{align}
for all $X \le Y$ (component wise inequality) and $i \in V$, where $e_i$ is the $i$-th unit vector.
This definition is later extended to distributive lattices~\cite{gottschalk2015submodular} and can be extended to general lattices (see Section~\ref{sec:submodular}). 
%\COMM{SN}{しょうがないけど，general - generalize がたくさん出て来てちょっと嫌．extend だとニュアンスが変わってしまうか？}
However, Example~\ref{ex:critical} above is still crucial, and therefore the objective function of the principal component analysis cannot be an approximated version of a DR-submodular function.

To summarize the above discussion, our main task is to define submodularity on lattices that should satisfy the following two properties:
%\COMM{SN}{こういうときって，「．」とか「：」無くていいんでしたっけ？}
\begin{enumerate} %[$\hspace{15pt}$ 1.]
\item It captures some important practical problems such as the principal component analysis.
\item It admits efficient approximation algorithms on some constraints.
\end{enumerate}

\paragraph{Our contributions}

In this study, in response to the above two requirements, we make the following contributions:
\begin{enumerate} %[$\hspace{15pt}$ 1.]
\item 
We define \emph{downward DR-submodularity} and \emph{upward DR-submodularity} on \emph{lattices}, which generalize the DR-submodularity (Section~\ref{sec:submodular}).
Our directional DR-submodularities are capable of representing important machine learning problems such as the principal component analysis and sparse dictionary selection problem (Section~\ref{sec:examples}).

\item 
We propose approximation algorithms for maximizing (1) monotone downward DR-submodular function over height constraint, (2) monotone downward DR-submodular function over knapsack constraint, and (3) non-monotone DR-submodular function (Section~\ref{sec:algorithms}). 
These are obtained by generalizing the existing algorithms for maximizing the submodular set functions. 
Thus, even our directional DR-submodularities are strictly weaker than the lattice DR-submodularity; it is sufficient to admit approximation algorithms.
\end{enumerate}

All the proofs of propositions and theorems are given in Appendix in the supplementary material.

\paragraph{Related Work}

For the principal component analysis, we can see that the greedy algorithm, which iteratively selects the largest eigenvectors of the correlation matrix, solves the principal component analysis problem exactly~\cite{abdi2010principal}.

With regard to the sparse dictionary selection problem, several studies~\cite{gilbert2003approximation,tropp2003improved,tropp2004greed,das2008algorithms} have analyzed greedy algorithms.
In general, the objective function for the sparse dictionary selection problem is not submodular. 
Therefore, researchers introduced approximated versions of the submodularity and analyzed the approximation guarantee of algorithms with respect to the parameter.

Krause and Cevher~\cite{krause2010submodular} showed that function \eqref{eq:pca} is an \emph{approximately submodular} function whose additive gap $\delta \ge 0$ depends on the mutual coherence. They also showed that the greedy algorithm gives $(1 - 1/e, k \delta)$-approximate solution.% 
\footnote{A solution $X$ is an \emph{$(\alpha, \delta)$-approximate solution} if it satisfies $f(X) \ge \alpha \max_{X' \in \mathcal{F}} f(X') - \delta$. If $\delta = 0$ then we simply say that it is an $\alpha$-approximate solution.}

Das and Kempe~\cite{das2011submodular} introduced the \emph{submodularity ratio}, which is another measure of submodularity.
For the set function maximization problem, the greedy algorithm attains a provable approximation guarantee depending on the submodularity ratio.
The approximation ratio of the greedy algorithm is further improved by combining with the curvature~\cite{bian2017guarantees}.
Elenberg et al.~\cite{elenberg2016restricted} showed that, if function $l \colon \mathbb{R}^d \to \mathbb{R}$ has a bounded restricted convexity and a bounded smoothness, the corresponding set function $F(S) := l(0) - \min_{\mathrm{supp}(x) \in S} l(x)$ has a bounded submodularity ratio.
%\COMM{SN}{やや文構造をとるのがしんどかった．（that if ... 以降）}
Khanna et al.~\cite{khanna2017approximation} applied the submodularity ratio for the low-rank approximation problem.

It should be emphasized that all the existing studies analyzed the greedy algorithm as a function of a set of vectors (the basis of the subspace), instead of as a function of a subspace. 
This overlooks the structure of the subspaces causing difficulties as described above.

%In particular, it may underestimate the submodularity of the function.
%In Section~\ref{sec:sparse}, we provide an example that considering subspaces instead of vectors significantly improves the approximation guarantee.
\section{Preliminaries}
\label{sec:preliminaries}

A \emph{lattice} $(\mathcal{L}, \le)$ is a partially ordered set (poset) such that, for any $X, Y \in \mathcal{L}$, the least upper bound $X \lor Y := \inf \{ Z \in \mathcal{L} : X \le Z, Y \le Z \} $ and the greatest lower bound $X \land Y := \sup \{ Z \in L : Z \le X, Z \le Y \} $ uniquely exist.
We often say ``$\mathcal{L}$ is a lattice'' by omitting $\le$ if the order is clear from the context.
In this paper, we assume that the lattice has the smallest element $\bot \in \mathcal{L}$.

A subset $\mathcal{I} \subseteq \mathcal{L}$ is \emph{lower set} if $Y \in \mathcal{I}$ then any $X \in \mathcal{L}$ with $X \le Y$ is also $X \in \mathcal{I}$. 
For $Y \in \mathcal{L}$, the set $\mathcal{I}(Y) = \{ X \in \mathcal{L} : X \le Y \}$ is called the \emph{lower set of $Y$}.

A sequence $X_1 < \cdots < X_k$ of elements of $\mathcal{L}$ is a \emph{composition series} if there is no $Y \in \mathcal{L}$ such that $X_i < Y < X_{i+1}$ for all $i$.
The length of the longest composition series from $\bot$ to $X$ is referred to as the \emph{height} of $X$ and is denoted by $h(X)$.
The height of a lattice is defined by $\sup_{X \in \mathcal{L}} h(X)$.
If this value is finite, the lattice has the largest element $\top \in \mathcal{L}$. 
Note that the height of a lattice can be finite even if the lattice has infinitely many elements. For example, the height of the vector lattice $\mathcal{L}(\mathbb{R}^d)$ is $d$.
%\COMM{SN}{lattice の高さの有限性の仮定がなくなってる．おそらく使ってるのは double greedy だけなので，定理の主張に入れちゃうのもありだと思う．}

A lattice $\mathcal{L}$ is \emph{distributive} if it satisfies the distributive law: $  (X \land Y) \lor Z = (X \lor Z) \land (Y \lor Z) $.
A lattice $\mathcal{L}$ is \emph{modular} if it satisfies the modular law: $ X \le B \Rightarrow X \lor (A \land B) = (X \lor A) \land B $.
Every distributive lattice is modular.
On a modular lattice $\mathcal{L}$, all the composition series between $X \in \mathcal{L}$ and $Y \in \mathcal{L}$ have the same length.
The lattice is modular if and only if its height function satisfies the modular equality: $h(X) + h(Y) = h(X \lor Y) + h(X \land Y)$.
Modular lattices often appear with algebraic structures. 
For example, the set of all subspaces of a vector space forms a modular lattice.
Similarly, the set of all normal subgroups of a group forms a modular lattice. 

For a lattice $\mathcal{L}$, an element $a \in \mathcal{L}$ is \emph{join-irreducible} if there no $X \neq a, Y \neq a$ such that $a = X \lor Y$.\footnote{For the set lattice $2^V$ of a set $V$, the join-irreducible elements correspond to the singleton sets, $\{ a \in V \}$. Thus, for clarity, we use upper case letters for general lattice elements (e.g., $X$ or $Y$) and lower case letters for join-irreducible elements (e.g., $a$ or $b$).}
We denote by $J(\mathcal{L}) \subseteq \mathcal{L}$ the set of all join-irreducible elements.
Any element $X \in \mathcal{L}$ is represented by a join of join-irreducible elements; therefore the structure of $\mathcal{L}$ is specified by the structure of $J(\mathcal{L})$.
A join irreducible element $a \in J(\mathcal{L})$ is \emph{admissible} with respect to an element $X \in \mathcal{L}$ if $a \not \le X$ and any $a' \in \mathcal{L}$ with $a' < a$ satisfies $a' \le X$.
We denote by $\adm{X}$ the set of all admissible elements with respect to $X$.
A set $\cl{a \mid X} = \{ a' \in \adm{X} : X \lor a = X \lor a' \}$ is called a \emph{closure} of $a$ at $X$.
See Figures~\ref{fig:admissible} and \ref{fig:closure} for the definition of admissible elements and closure.
Note that $a$ is admissible with respect to $X$ if and only if the distance from the lower set of $X$ to $a$ is one.

\begin{example}
In the vector lattice $\mathcal{L}(\mathbb{R}^d)$, each element corresponds to a subspace. An element $a \in \mathcal{L}(\mathbb{R}^d)$ is join-irreducible if and only if it has dimension one.
A join-irreducible element $a \in \mathcal{L}(\mathbb{R}^d)$ is admissible to $X \in \mathcal{L}(\mathbb{R}^d)$ if these are linearly independent.
The closure $\cl{a | X}$ is the one dimensional subspaces contained in $X \lor a$ independent to $X$.
\end{example}

\begin{figure}[tb]
\begin{minipage}{0.49\textwidth}
\center
\begin{tikzpicture}[scale=0.70]
\fill[pattern = dots] (0, -0.6) -- (0.6, 0) -- (-2, 2.6) -- (-2.6,2) -- cycle;
\node[circle,draw,inner  sep=1mm,label=left:$\bot$] (bot) at (0,0) {};
\node[circle,draw,inner sep=1mm,label=left:$X$] (X) at (-2,2) {};
\node[circle,draw,inner sep=1mm,label=left:$a$] (a) at (1,1) {};
\node[circle,draw,inner sep=1mm,label=left:$b$] (b) at (2,2) {};
\node[circle,draw,inner sep=1mm,label={left:$X \lor a$}] (Xa) at (-1,3) {};
\node[circle,draw,inner sep=1mm,label={left:$X \lor b$}] (Xb) at (0,4) {};
\draw[-] (bot) -- (X);
\draw[-] (bot) -- (a);
\draw[-] (a) -- (b);
\draw[-] (b) -- (Xb);
\draw[-] (a) -- (Xa);
\draw[-] (X) -- (Xa);
\draw[-] (Xa) -- (Xb);
\end{tikzpicture}
\captionsetup{width=0.95\textwidth}
\caption{$a$ is admissible with respect to $X$ but $b$ is not because of the existence of $a$. The shaded area represents the lower set of $X$.}
\label{fig:admissible}
\end{minipage}
\begin{minipage}{0.49\textwidth}
\center
\begin{tikzpicture}[scale=0.70]
\node[circle,draw,inner sep=1mm,label=left:$\bot$] (bot) at (0,0) {};
\node[circle,draw,inner sep=1mm,label=left:$X$] (X) at (-2,2) {};
\node[circle,draw,inner sep=1mm,label=left:$a$] (a) at (0,2) {};
\node[circle,draw,inner sep=1mm,label=left:$b$] (b) at (2,2) {};
\node[circle,draw,inner sep=1mm,label={left:$X \lor a = X \lor b$}] (top) at (0,4) {};
\draw[-] (bot) -- (X);
\draw[-] (bot) -- (a);
\draw[-] (bot) -- (b);
\draw[-] (X) -- (top);
\draw[-] (a) -- (top);
\draw[-] (b) -- (top);
\end{tikzpicture}
\captionsetup{width=0.95\textwidth}
\caption{Both $a$ and $b$ are admissible with respect to $X$, and $X \lor a = X \lor b$. Thus $b \in \cl{a | X}$ and $a \in \cl{b | X}$.}
\label{fig:closure}
\end{minipage}
\end{figure}

\section{Directional DR-submodular functions on modular lattices}
\label{sec:submodular}

We introduce new submodularities on lattices. 
As described in Section~\ref{sec:introduction}, our task is to find useful definitions of ``submodularities'' on lattices; thus, this section is the most important part of this paper.

Recall definition~\eqref{eq:drsubmodular} of the DR-submodularity on the integer lattice.
Then, we can see that $X + e_i = X \lor a$ and $Y + e_i = Y \lor b$ for $a = (X_i + 1) e_i$ and $b = (Y_i + 1) e_i$, where $X_i$ and $Y_i$ are the $i$-th components of $X$ and $Y$, respectively.
Here, $a$ and $b$ are join-irreducibles in the integer lattice, $a \in \adm{X}$, $b \in \adm{Y}$, and $a \le b$.
%\COMM{SN}{文頭数式}
Thus, a natural definition of the DR-submodularity on lattices is as follows.
\begin{definition}[Strong DR-submodularity]
A function $f \colon \mathcal{L} \to \mathbb{R}$ is \emph{strong DR-submodular} if, for all $X, Y \in \mathcal{L}$ with $X \le Y$ and $a \in \adm{X}, b \in \adm{Y}$ with $a \le b$, the following holds.
\begin{align}
\label{eq:strongdr}
	f(X \lor a) - f(X) \ge f(Y \lor b) - f(Y)
\end{align}
\end{definition}
The same definition is introduced by Gottshalk and Peis~\cite{gottschalk2015submodular} for distributive lattices.
However, this is too strong for our purpose because it cannot capture the principal component analysis; you can check this in Example~\ref{ex:critical}.
Therefore, we need a weaker concept of DR-submodularities.

Recall that $f(Y \lor b) - f(Y) = f(Y \lor b') - f(Y)$ for all $b' \in \cl{b | Y}$. 
Thus, the strong DR-submodularity~\eqref{eq:strongdr} is equivalent to the following.
\begin{align}
	f(Y \lor b) - f(Y) \le \min_{b' \in \cl{b | Y}} \min_{a \in \adm{X}, a \le b'} f(X \lor a) - f(X).
\end{align}
By relaxing the outer $\min$ to $\max$, we obtain the following definition.
%\COMM{SN}{この書き方だと，additive gap を加えたほうが効いていて PCA が捉えられるようになったとも読めてしまう気がする．（strong DR + additive gap でダメなのとか余計な疑問を生みそう．）Example 1 では比に言及していたけど， additive gap についても $f$ の最大値と同じになるくらい悪くなりえて無意味と言及しときたい気がする．}
\begin{definition}[Downward DR-submodularity]
Let $\mathcal{L}$ be a lattice.
A function $f \colon \mathcal{L} \to \mathbb{R}$ is \emph{downward DR-submodular with additive gap $\delta$}, if  for all $X \le Y$ and $b \in \adm{Y}$, the following holds.
\begin{align}
\label{eq:downwarddr}
f(Y \lor b) - f(Y) \le  \max_{b' \in \cl{b | Y}} \min_{a \in \adm{X}, a \le b'} f(X \lor a) - f(X) + \delta.
\end{align}
\end{definition}
Similarly, the strong DR-submodularity~\eqref{eq:strongdr} is equivalent to the following.
\begin{align}
	f(X \lor a) - f(X) \ge \max_{b \ge a} \max_{\mathring{Y}: b \in \adm{\mathring{Y}}, Y = \mathring{Y} \lor b, X \le \mathring{Y}} f(Y) - f(\mathring{Y}).
\end{align}
By relaxing the inner $\max$ to $\min$, we obtain the following definition.
%and by introducing additive gap, we obtain the following definition.
\begin{definition}[Upward DR-submodularity]
Let $\mathcal{L}$ be a lattice.
$f \colon \mathcal{L} \to \mathbb{R}$ is \emph{upward DR-submodular with additive gap $\delta$}, if for all $X, Y \in \mathcal{L}$ and $a \in \adm{X}$ with $X \lor a \le Y$, the following holds.
\begin{align}
\label{eq:upwarddr}
f(X \lor a) - f(X) \ge \max_{b \ge a} \min_{\mathring{Y}: b \in \adm{\mathring{Y}}, Y = \mathring{Y} \lor b, X \le \mathring{Y}} f(Y) - f(\mathring{Y}) - \delta
\end{align}
\end{definition}
%We simply say that a function is \emph{downward} or \emph{upward DR-submodular} if $\delta = 0$ in the above definitions.
If a function $f$ is both downward DR-submodular with additive gap $\delta$ and upward DR-submodular with additive gap $\delta$, then we say that $f$ is \emph{bidirectional DR-submodular} with additive gap $\delta$.
We say \emph{directional DR-submodularity} to refer these new DR-submodularities.

The strong DR-submodularity implies the bidirectional DR-submodularity, because both downward and upward DR-submodularities are relaxations of the strong DR-submodularity.
Interestingly, the converse also holds in distributive lattices.
\begin{proposition}
\label{prop:upward_implies_strong}
On a distributive lattice, 
the strong DR-submodularity, downward DR-submodularity, and upward DR-submodularity are equivalent.
\qed
\end{proposition}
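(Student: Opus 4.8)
The plan is to leverage the two reformulations of strong DR-submodularity already recorded in the text: \eqref{eq:downwarddr} is obtained from \eqref{eq:strongdr} by relaxing a $\min$ over the closure $\cl{b \mid Y}$ to a $\max$, and \eqref{eq:upwarddr} by relaxing a $\max$ over the eligible $\mathring{Y}$ to a $\min$. Since relaxing an extremum only weakens a one-sided inequality, strong DR-submodularity already implies both downward and upward DR-submodularity with the same additive gap (noted in the text). So it remains to prove the two converses on a distributive lattice, and the whole point will be that there the sets over which these relaxed extrema range are singletons, so that the ``relaxations'' are in fact equalities.

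For the downward direction I would first show $\cl{a \mid X} = \{a\}$ whenever $a \in \adm{X}$. If $a' \in \adm{X}$ with $X \lor a = X \lor a'$, then $a \le X \lor a = X \lor a'$, so distributivity gives $a = a \land (X \lor a') = (a \land X) \lor (a \land a')$; as $a$ is join-irreducible, either $a = a \land X$, impossible since $a \not\le X$, or $a = a \land a'$, i.e.\ $a \le a'$. Symmetry yields $a = a'$. With trivial closures the outer $\max$ in \eqref{eq:downwarddr} disappears, so \eqref{eq:downwarddr} reads $f(Y \lor b) - f(Y) \le \min_{a \in \adm{X},\, a \le b}\bigl(f(X \lor a) - f(X)\bigr) + \delta$ for all $X \le Y$, $b \in \adm{Y}$; applied to an arbitrary strong-DR instance $X \le Y$, $a \in \adm{X}$, $b \in \adm{Y}$, $a \le b$ (for which $a$ lies in the range of the inner $\min$), this is exactly \eqref{eq:strongdr} with gap $\delta$.

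For the upward direction I would prove the dual uniqueness fact: on a distributive lattice, for fixed $Y$ and fixed join-irreducible $b \le Y$ there is at most one $\mathring{Y}$ with $b \in \adm{\mathring{Y}}$ and $\mathring{Y} \lor b = Y$. If $\mathring{Y}, \mathring{Y}'$ both qualify, then $\mathring{Y} = \mathring{Y} \land (\mathring{Y}' \lor b) = (\mathring{Y} \land \mathring{Y}') \lor (\mathring{Y} \land b)$; since $b \not\le \mathring{Y}$ and $b$ is join-irreducible, $\mathring{Y} \land b < b$, hence $\mathring{Y} \land b \le \mathring{Y}'$ by admissibility of $b$ at $\mathring{Y}'$ — here it matters that admissibility quantifies over \emph{all} lattice elements strictly below $b$, not only join-irreducible ones — so $\mathring{Y} \land b \le \mathring{Y} \land \mathring{Y}'$, giving $\mathring{Y} \le \mathring{Y}'$, and symmetry gives equality. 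Consequently the inner $\min$ over $\mathring{Y}$ in \eqref{eq:upwarddr} runs over a set of size at most one and therefore equals the corresponding $\max$, so \eqref{eq:upwarddr} becomes precisely the $\max$--$\max$ reformulation of \eqref{eq:strongdr} displayed just before the definition of upward DR-submodularity (and the side condition $X \lor a \le Y$ there is automatic for every triple arising in that reformulation, since $X \le \mathring{Y}$ and $a \le b$ force $X \lor a \le \mathring{Y} \lor b = Y$). Hence upward DR-submodularity with gap $\delta$ is equivalent to strong DR-submodularity with gap $\delta$.

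I expect the only real difficulty — the ``main obstacle'' — to be bookkeeping rather than substance: aligning the quantifier structures of \eqref{eq:downwarddr}, \eqref{eq:upwarddr}, and \eqref{eq:strongdr} exactly, in particular checking that the inner extrema are over \emph{nonempty} singletons so that ``$\min = \max$'' is literally true and not a $\pm\infty$ artifact, and using the precise definition of admissibility in the absorption steps. No modular-lattice machinery (composition series, the height equality) is needed; distributivity together with join-irreducibility suffices, and since the argument never touches the gap, ``equivalent'' holds verbatim including the value of $\delta$ (with $\delta = 0$ recovering the stated equivalence with plain strong DR-submodularity).
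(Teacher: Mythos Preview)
Your proposal is correct and follows the same high-level strategy as the paper: show that on a distributive lattice the sets over which the relaxed extrema in \eqref{eq:downwarddr} and \eqref{eq:upwarddr} range are singletons, so that the relaxations collapse back to \eqref{eq:strongdr}. The difference is in how those singleton facts are established. The paper invokes Birkhoff's representation theorem (identifying $\mathcal{L}$ with the lower-set lattice of $J(\mathcal{L})$), from which $\cl{a\mid X}=\{a\}$ and the uniqueness of $\mathring{Y}$ are read off directly; you instead prove both facts by hand from the distributive law and the definitions of join-irreducibility and admissibility. Your route is more elementary and self-contained---it avoids the external theorem and the implicit finiteness assumption that comes with it---while the paper's route is shorter once Birkhoff is granted. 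Substantively the two arguments coincide, and your explicit handling of the side condition $X\lor a\le Y$ and the nonemptiness issue is a point the paper leaves implicit.
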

Therefore, we can say that directional DR-submodularities are required to capture the specialty of non-distributive lattices such as the vector lattice.

At the cost of generalization, in contrast to the lattice submodularity~\eqref{eq:submodular} and the strong DR-submodularity~\eqref{eq:strongdr}, the downward and upward DR-submodularity are \emph{not} closed under addition, because the elements attained in the min/max in the above definitions can depend on the objective function.
%\COMM{SN}{()抜け}
\section{Examples}
\label{sec:examples}

In this section, we present several examples of directional DR-submodular functions to show that our concepts \emph{can} capture several machine learning problems.

\subsection{Principal component analysis}
\label{sec:pca}

Let $\{ u_i \}_{i \in I} \subset \mathbb{R}^d$ be the given data.
We consider the vector lattice $\mathcal{L}(\mathbb{R}^d)$ of all the subspaces of $\mathbb{R}^d$, and the objective function $f$ defined by \eqref{eq:pca}.
Then, the following holds. 
%\COMM{SN}{応用上は要らないのだけれど，upward の具体例が quantum cut しかないので，ここも bidirectional と言っておきたい．}
%\COMM{TM}{簡単に示せましたっけ（面倒そうだからカットした記憶）}
%\COMM{SN}{PCA の方は directional DR の右辺で直交するベクトルをとってきて計算するとでるので，やってること自体は単純だけど，なんやかんや書くと半ページくらいにはなると思います．downward を示すパートは飛ばせるのでその点は楽．
%dictionary selection の方は面倒でやっていないのでそもそも本当に bidirectional か謎です．
%}
%\COMM{TM}{とりあえず PCA だけやりましょう！}

\begin{proposition}
\label{prop:pca}
The function $f \colon \mathcal{L}(\mathbb{R}^d) \to \mathbb{R}$ defined by \eqref{eq:pca} is a monotone bidirectional DR-submodular function.
\qed
\end{proposition}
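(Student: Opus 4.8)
The plan is to reduce the whole statement to one elementary computation, the \emph{marginal gain formula} for $f$. Put $C := \sum_{i\in I} u_i u_i^\top$, a positive semidefinite operator, so that $f(X) = \operatorname{tr}(\Pi_X C)$. If $Z \le W$ in $\mathcal{L}(\mathbb{R}^d)$ with $\dim W = \dim Z + 1$, write $W = Z \oplus \mathrm{span}\{w\}$ with $w \perp Z$ and $\|w\| = 1$; since $\Pi_W = \Pi_Z + w w^\top$, we get
\[
  f(W) - f(Z) \;=\; \operatorname{tr}\!\big(w w^\top C\big) \;=\; w^\top C\, w \;=\; \sum_{i\in I}\langle u_i, w\rangle^2 \;\ge\; 0.
\]
In particular, if $c$ is a line with $c \not\le Z$ and $v$ spans $c$, then $Z \lor c = Z \oplus \mathrm{span}\{w\}$ where $w$ is the unit vector along $\Pi_{Z^\perp} v$, and $f(Z\lor c) - f(Z) = w^\top C w$. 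Chaining this along a composition series from $X$ to $Y$ (or directly, since $\|\Pi_X u\| \le \|\Pi_Y u\|$ when $X \le Y$) gives monotonicity. It remains to verify the two directional DR-submodularities, which we do with additive gap $\delta = 0$. Throughout we use that every join-irreducible of $\mathcal{L}(\mathbb{R}^d)$ is an atom (a line), so the constraints ``$a \le b'$'' and ``$b \ge a$'' appearing in the definitions force equality of the two join-irreducibles.

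\emph{Downward DR-submodularity.} Fix $X \le Y$ and $b \in \adm{Y}$, and let $w_Y$ be the unit vector along the orthogonal projection onto $Y^\perp$ of a vector spanning $b$; by the marginal gain formula $f(Y\lor b) - f(Y) = w_Y^\top C w_Y$. Take $b' := \mathrm{span}\{w_Y\}$. Then $b'$ is a line with $b' \not\le Y$ and $Y \lor b' = Y \oplus \mathrm{span}\{w_Y\} = Y \lor b$, so $b' \in \cl{b\mid Y}$; also $b' \not\le X$ since $w_Y \perp Y \ge X$. As $a = b'$ is the only admissible join-irreducible $\le b'$, the inner minimum in \eqref{eq:downwarddr} at this $b'$ equals $f(X \lor b') - f(X)$, and because $w_Y \perp X$ the orthogonal-to-$X$ direction of $b'$ is still $w_Y$, so this equals $w_Y^\top C w_Y$. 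Hence the right-hand side of \eqref{eq:downwarddr} is at least $w_Y^\top C w_Y = f(Y\lor b) - f(Y)$, as required.

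\emph{Upward DR-submodularity.} Fix $X, Y$ and $a \in \adm{X}$ with $X \lor a \le Y$. Let $v$ span $a$ and let $w^*$ be the unit vector along $\Pi_{X^\perp} v$; then $f(X\lor a) - f(X) = (w^*)^\top C w^*$, and $w^* \in X \lor a \le Y$ while $w^* \perp X$. Set $\mathring Y := Y \cap (w^*)^\perp$. Then $\dim\mathring Y = \dim Y - 1$, $X \le \mathring Y$ (as $X \le Y$ and $X \perp w^*$), $a \not\le \mathring Y$ (since $\langle v, w^*\rangle = \|\Pi_{X^\perp}v\| > 0$), and $\mathring Y \lor a = Y$ (because $w^* \in \mathring Y \lor a$, so $\mathring Y \lor a \supseteq \mathring Y \oplus \mathrm{span}\{w^*\} = Y$). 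Thus $\mathring Y$ is a valid choice in the inner minimum of \eqref{eq:upwarddr} (with $b = a$), and since $Y = \mathring Y \oplus \mathrm{span}\{w^*\}$ with $w^* \perp \mathring Y$, the marginal gain formula gives $f(Y) - f(\mathring Y) = (w^*)^\top C w^*$. Therefore that inner minimum is at most $(w^*)^\top C w^* = f(X\lor a) - f(X)$, which is \eqref{eq:upwarddr} with $\delta = 0$. Combining the three parts proves the proposition.

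\emph{Where the work is.} There is no analytic difficulty once the marginal gain formula is in hand: both inequalities are witnessed by a single well-chosen element and in fact hold with equality, forcing $\delta = 0$. The only care needed is the lattice bookkeeping — checking that the proposed witnesses ($b' = \mathrm{span}\{w_Y\}$ for the downward case, $\mathring Y = Y \cap (w^*)^\perp$ for the upward case) are admissible, lie in the prescribed closure, and generate $Y$ — together with matching the direction of each relaxation: for downward DR-submodularity one exploits the outer maximum over $\cl{b\mid Y}$ to replace $b$ by its ``purely orthogonal'' representative $\mathrm{span}\{w_Y\}$, whereas for upward DR-submodularity one exploits the inner minimum over $\mathring Y$ by producing one codimension-one subspace of $Y$ whose new direction is exactly $w^*$.
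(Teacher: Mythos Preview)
Your proof is correct and uses the same witnesses as the paper: for the downward direction the paper (via the $\epsilon=0$ case of Proposition~\ref{prop:ortho}) takes $b'=(Y\lor b)\land Y^\perp=\mathrm{span}\{w_Y\}$, and for the upward direction it sets $a'=(X\lor a)\land X^\perp=\mathrm{span}\{w^*\}$ and $\mathring{Y}=Y\land(a')^\perp$, exactly your choices. Your trace formulation $f(X)=\operatorname{tr}(\Pi_X C)$ is a clean way to see that both directional inequalities in fact hold with equality in this linear case.
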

This provides a reason why the principal component analysis is solved by the greedy algorithm from the viewpoint of submodularity.

The objective function can be generalized further. 
Let $\rho_i: \mathbb{R} \to \mathbb{R}$ be a monotone non-decreasing concave function with $\rho_i(0) = 0$ for each $i \in I$.
Let
\begin{align}
\label{eq:generalizedpca}
	f_\rho(X) = \sum_{i \in I} \rho_i( \| \Pi_X u_i \|^2 ).
\end{align}
Then, the following holds.
\begin{proposition}
\label{prop:generalizedpca}
The function $f_\rho \colon \mathcal{L}(\mathbb{R}^d) \to \mathbb{R}$ defined by \eqref{eq:generalizedpca} is a monotone bidirectional DR-submodular function.
\qed
\end{proposition}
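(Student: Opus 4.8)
The plan is to reduce the statement to two elementary facts about the vector lattice $\mathcal{L}(\mathbb{R}^d)$ and then use concavity of the $\rho_i$. Throughout we use that a join-irreducible element is a line, that $\adm{X}$ is the set of lines not contained in $X$, that $\cl{b \mid Y}$ is the set of lines in $Y \lor b$ not contained in $Y$, and that the only join-irreducible element below a line $b'$ is $b'$ itself. The two facts are: (i) if $w$ is a unit vector with $w \perp X$, then for every $u \in \mathbb{R}^d$ one has $\|\Pi_{X \lor \langle w\rangle} u\|^2 = \|\Pi_X u\|^2 + \langle w, u\rangle^2$, since $\Pi_{X \lor \langle w\rangle} = \Pi_X + w w^\top$ and $\Pi_X u \perp w$; and (ii) if $\rho \colon \mathbb{R} \to \mathbb{R}$ is concave and non-decreasing, then for $0 \le p \le q$ and $t \ge 0$ we have $\rho(p + t) - \rho(p) \ge \rho(q + t) - \rho(q)$, i.e. concave functions have decreasing increments. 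Monotonicity of $f_\rho$ is then immediate: $X \le Y$ implies $\|\Pi_X u_i\|^2 \le \|\Pi_Y u_i\|^2$ for each $i$, and each $\rho_i$ is non-decreasing, so $f_\rho(X) \le f_\rho(Y)$.

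For downward DR-submodularity, fix $X \le Y$ and a line $b \in \adm{Y}$. Let $w$ be a unit vector spanning the orthogonal complement of $Y$ inside $Y \lor b$ (one dimensional since $b$ is a line with $b \not\le Y$). Then $b' := \langle w\rangle$ satisfies $Y \lor b' = Y \lor b$ and $w \notin Y$, hence $b' \in \cl{b \mid Y}$, and since $w \perp Y \supseteq X$ we also have $b' \in \adm{X}$; as the inner minimum in \eqref{eq:downwarddr} ranges over the single element $a = b'$, it suffices to prove $f_\rho(Y \lor b) - f_\rho(Y) \le f_\rho(X \lor b') - f_\rho(X)$. Applying fact (i) at both $X$ and $Y$ (legitimate because $w$ is orthogonal to both), these two differences are $\sum_i \bigl[\rho_i(\|\Pi_Y u_i\|^2 + \langle w, u_i\rangle^2) - \rho_i(\|\Pi_Y u_i\|^2)\bigr]$ and $\sum_i \bigl[\rho_i(\|\Pi_X u_i\|^2 + \langle w, u_i\rangle^2) - \rho_i(\|\Pi_X u_i\|^2)\bigr]$ respectively. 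Since $\|\Pi_X u_i\|^2 \le \|\Pi_Y u_i\|^2$, fact (ii) gives the desired inequality term by term, and summing over $i$ settles this case with $\delta = 0$.

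For upward DR-submodularity, fix $X, Y$ and a line $a \in \adm{X}$ with $X \lor a \le Y$; the outer maximum in \eqref{eq:upwarddr} is over join-irreducible $b \ge a$, which forces $b = a$. Decompose $Y = X \oplus Z$ with $Z \perp X$, let $v$ be the normalization of the orthogonal projection onto $Z$ of a generator of $a$ (nonzero because $a \not\le X$), so that $X \lor a = X \oplus \langle v\rangle$, and fix an orthogonal decomposition $Z = Z' \oplus \langle v\rangle$. Set $\mathring{Y} := X \oplus Z'$. A short check shows $X \le \mathring{Y} \le Y$, that $a \not\le \mathring{Y}$ (hence $a \in \adm{\mathring{Y}}$), and that $\mathring{Y} \lor a = Y$, so $\mathring{Y}$ is an admissible choice in the inner minimum; it therefore suffices to prove $f_\rho(X \lor a) - f_\rho(X) \ge f_\rho(Y) - f_\rho(\mathring{Y})$. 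Since $v \perp X$ and $v \perp \mathring{Y}$, fact (i) rewrites the two sides as $\sum_i \bigl[\rho_i(\|\Pi_X u_i\|^2 + \langle v, u_i\rangle^2) - \rho_i(\|\Pi_X u_i\|^2)\bigr]$ and $\sum_i \bigl[\rho_i(\|\Pi_{\mathring{Y}} u_i\|^2 + \langle v, u_i\rangle^2) - \rho_i(\|\Pi_{\mathring{Y}} u_i\|^2)\bigr]$; as $X \le \mathring{Y}$, fact (ii) again gives the inequality termwise, and summing over $i$ completes the proof with $\delta = 0$. Taking $\rho_i = \mathrm{id}$ recovers Proposition~\ref{prop:pca} as a special case.

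The main obstacle is the upward direction: for arbitrary $X \le Y$ and admissible line $a$ one must exhibit a witness $\mathring{Y}$ lying one step below $Y$, containing $X$, carried back up to $Y$ by $a$, and along which the incremental gain is controlled. This is exactly what the orthogonal decomposition $Y = X \oplus Z' \oplus \langle v\rangle$ provides, and verifying the three lattice conditions $X \le \mathring{Y} \le Y$, $a \not\le \mathring{Y}$, and $\mathring{Y} \lor a = Y$ is the only place a small case analysis is required; everything else is the projection identity (i) together with the concavity estimate (ii).
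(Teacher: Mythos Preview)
Your proof is correct and follows essentially the same approach as the paper's: for the downward direction the paper appeals to Proposition~\ref{prop:ortho} with mutual coherence $\epsilon = 0$, which unwinds to exactly your argument with $b' = \langle w\rangle = (Y \lor b) \cap Y^\perp$; for the upward direction the paper takes $a' = (X \lor a) \land X^\perp$ (your $\langle v\rangle$) and sets $\mathring{Y} = Y \land (a')^\perp$, which coincides with your $X \oplus Z'$. The only cosmetic difference is that the paper phrases the constructions via lattice meets with orthogonal complements while you work explicitly with orthogonal decompositions.
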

If we use this function instead of the standard function~\eqref{eq:pca}, we can ignore the contributions from very large vectors because if $u_i$ is already well approximated in $X$, there is less incentive to seek larger subspace for $u_i$ due to the concavity of $\rho_i$.
See Experiment in Appendix.

\subsection{Sparse dictionary selection}
\label{sec:sparse}

Let $V \subseteq \mathbb{R}^d$ be a set of vectors called a dictionary.
We consider $\mathcal{L}(V) = \{ \mathrm{span}(S) : S \subseteq V \}$ of all subspaces spanned by $V$, which forms a (not necessarily modular) lattice. 
The height of $X \in \mathcal{L}(V)$ coincides with the dimension of $X$.
Let $\{ u_i \}_{i \in I} \subset \mathbb{R}^d$ be the given data.
Then the sparse dictionary selection problem is formulated by the maximization problem of $f$ defined by \eqref{eq:pca} on this lattice under the height constraint.

In general, the function $f$ is not a directional DR-submodular function on this lattice. 
However, we can prove that $f$ is a downward DR-submodular function with a provable additive gap. 
We introduce the following definition.

\begin{definition}[Mutual coherence of lattice]
Let $\mathcal{L}$ be a lattice of subspaces. 
For $\epsilon \ge 0$, the lattice has \emph{mutual coherence $\epsilon$}, if for any $X \in \mathcal{L}$, there exists $X' \in \mathcal{L}$ such that $X \land X' = \bot$, $X \lor X' = \top$, and for all unit vectors $u \in X$ and $u' \in X'$, 
$ |\langle u, u' \rangle| \le \epsilon $.
The infimum of such $\epsilon$ is called the \emph{mutual coherence of $\mathcal{L}$}, and is denoted by $\mu(\mathcal{L})$.
\end{definition}
Our mutual coherence of a lattice is a generalization of the \emph{mutual coherence} of a set of vectors~\cite{donoho2003optimally}. 
For a set of unit vectors $V = \{u_1, \ldots, u_N\} \subset \mathbb{R}^d$, its mutual coherence is defined by $\mu(V) = \max_{i \neq j} |\langle u_i, u_j \rangle|$.
The mutual coherence of a set of vector is extensively used in compressed sensing to prove the uniqueness of the solution in a sparse recovery problem~\cite{eldar2012compressed}.
Here, we have the following relation between the mutual coherence of a lattice and that of a set of vectors, which is the reason why we named our quantity mutual coherence.
\begin{lemma}
\label{lem:mutualcoherence}
Let $V = \{ u_1, \ldots, u_N \}$ be a set of unit vectors whose mutual coherence is $\mu(V) \le \epsilon$.
Then, the lattice $\mathcal{L}(V)$ generated by the vectors has mutual coherence $\mu(\mathcal{L}(V)) \le d \epsilon / (1 - d \epsilon)$.
\qed
\end{lemma}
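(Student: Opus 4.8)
The plan is to unwind the definition of the mutual coherence of the lattice: given an arbitrary $X \in \mathcal{L}(V)$, one must exhibit a complement $X' \in \mathcal{L}(V)$ with $X \land X' = \bot$, $X \lor X' = \top$, and with every pair of unit vectors $u \in X$, $u' \in X'$ satisfying $|\langle u, u'\rangle| \le d\epsilon/(1-d\epsilon)$. Since $X \in \mathcal{L}(V)$, it is spanned by some subset of $V$; thin this to a basis $S \subseteq V$ of $X$, then greedily enlarge $S$ by dictionary vectors to a basis $B = S \cup T$ of $\top = \mathrm{span}(V)$ (possible because whenever $\mathrm{span}(S \cup T) \neq \mathrm{span}(V)$ there is a vector of $V$ outside it). Setting $X' := \mathrm{span}(T) \in \mathcal{L}(V)$, linear independence of $B$ yields $X \land X' = \bot$ and $X \lor X' = \mathrm{span}(B) = \top$. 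The remaining task is to control the angles, and this is where the coherence hypothesis enters.

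Write $B = \{v_1, \dots, v_m\}$ with $S = \{v_1, \dots, v_k\}$ and $T = \{v_{k+1}, \dots, v_m\}$, and let $G$ be the Gram matrix, $G_{ij} = \langle v_i, v_j \rangle$. Since the $v_i$ are unit vectors from $V$ with $\mu(V) \le \epsilon$, we have $G = I + E$ where $E$ is symmetric, has zero diagonal, and $|E_{ij}| \le \epsilon$. By Gershgorin's circle theorem every eigenvalue of $E$ has absolute value at most $(m-1)\epsilon$, so $\|E\|_{\mathrm{op}} \le (m-1)\epsilon$ and $\lambda_{\min}(G) \ge 1 - (m-1)\epsilon$. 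We may assume $d\epsilon < 1$ (otherwise the asserted bound is vacuous); since $m \le \dim \mathrm{span}(V) \le d$ this in particular guarantees $(m-1)\epsilon < 1$, so the lower bound on $\lambda_{\min}(G)$ is positive.

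Now take unit vectors $u \in X$, $u' \in X'$, expand $u = \sum_{i \le k} a_i v_i$ and $u' = \sum_{j > k} b_j v_j$, and let $\tilde a, \tilde b \in \mathbb{R}^m$ be the coefficient vectors zero-padded so that $\mathrm{supp}(\tilde a) \subseteq \{1,\dots,k\}$ and $\mathrm{supp}(\tilde b) \subseteq \{k+1,\dots,m\}$. Then $\tilde a^\top \tilde b = 0$, $\|u\|^2 = \tilde a^\top G \tilde a$, $\|u'\|^2 = \tilde b^\top G \tilde b$, and $\langle u, u'\rangle = \tilde a^\top G \tilde b = \tilde a^\top \tilde b + \tilde a^\top E \tilde b = \tilde a^\top E \tilde b$. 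From $1 = \|u\|^2 \ge \lambda_{\min}(G)\|\tilde a\|^2$ (and symmetrically for $\tilde b$) we get $\|\tilde a\|, \|\tilde b\| \le \lambda_{\min}(G)^{-1/2}$, hence
\[
|\langle u, u'\rangle| \;\le\; \|E\|_{\mathrm{op}}\,\|\tilde a\|\,\|\tilde b\| \;\le\; \frac{(m-1)\epsilon}{1-(m-1)\epsilon} \;\le\; \frac{d\epsilon}{1-d\epsilon},
\]
the last inequality using $m \le d$ and monotonicity of $t \mapsto t/(1-t)$ on $[0,1)$. As $X$ was arbitrary, $\mathcal{L}(V)$ has mutual coherence $d\epsilon/(1-d\epsilon)$, which gives the claim.

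I expect the only genuinely delicate point to be the construction in the first paragraph: ensuring the complement $X'$ lives inside $\mathcal{L}(V)$, i.e.\ is spanned by dictionary vectors, while still being compatible with a clean Gram-matrix estimate — choosing $S$ to be a dictionary basis of $X$ (rather than an arbitrary spanning subset) is precisely what makes the block structure line up. The rest is routine linear algebra: the Gershgorin bounds on $\|E\|_{\mathrm{op}}$ and $\lambda_{\min}(G)$, and the zero-padding identity $\langle u,u'\rangle = \tilde a^\top E \tilde b$ for the cross term.
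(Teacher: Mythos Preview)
Your proof is correct and follows essentially the same route as the paper's: construct $X'$ as the span of a complementary dictionary basis, bound the cross inner product via the off-diagonal Gram entries, and control the coefficient norms by Gershgorin on the Gram matrix. Your zero-padding device $\langle u,u'\rangle = \tilde a^\top E \tilde b$ is a slightly cleaner bookkeeping of the same estimate the paper writes as $\alpha^\top M\beta$, and your intermediate constant $(m-1)\epsilon/(1-(m-1)\epsilon)$ is marginally sharper before relaxing to $d\epsilon/(1-d\epsilon)$, but the argument is otherwise identical.
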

This means that if a set of vectors has a small mutual coherence, then the lattice generated by the vectors has a small mutual coherence.
Note that the converse does not hold. 
Consider $V = \{ u_1, u_2, u_3 \} \subset \mathbb{R}^2$ where $u_1 = (1,0)^\top$, $u_2 = (1/\sqrt{1+\epsilon^2}, \epsilon/\sqrt{1+\epsilon^2})^\top$, and $u_3 = (0,1)^\top$ for sufficiently small $\epsilon$.
Then the mutual coherence $\mu(V)$ of the vectors is $1/\sqrt{1 + \epsilon^2} \approx 1$; however, the mutual coherence $\mu(\mathcal{L})$ of the lattice generated by $V$ is $\epsilon/\sqrt{1 + \epsilon^2} \approx \epsilon$.
This shows that the mutual coherence of a lattice is a more robust concept than that of a set of vectors, which is a strong advantage of considering a lattice instead of a set of vectors.

If a lattice has a small mutual coherence, we can prove that the function $f$ is a monotone downward DR-submodular function with a small additive gap.
\begin{proposition}
\label{prop:ortho}
Let $V = \{ u_1, \ldots, u_N \} \subseteq \mathbb{R}^d$ be normalized vectors and $\mathcal{L}(V)$ be a lattice generated by $V$.
Suppose that $\mathcal{L}(V)$ forms a modular lattice.
Let $\{ v_i \}_{i \in I} \subset \mathbb{R}^d$.
Then, the function $f$ defined in \eqref{eq:generalizedpca} 
is a downward DR-submodular function with additive gap at most $3 \epsilon\rho(0) \sum_j \| v_j \|^2 / (1 - \epsilon^2)$ where $\epsilon = \mu(\mathcal{L}(V))$. % $\mu(\mathcal{L}(V)) \rho(0) \sum_j \| v_j \|^2$.
%\frac{ 3 \epsilon }{1 - \epsilon^2} \|v\|^2
\qed
\end{proposition}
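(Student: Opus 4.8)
The plan is to verify inequality~\eqref{eq:downwarddr} directly, exploiting the mutual coherence to control how badly the behavior of $f_\rho$ at $X$ can deviate from its behavior at $Y$. Fix $X \le Y$ in $\mathcal{L}(V)$ and a join-irreducible $b \in \adm{Y}$; since the lattice is a lattice of subspaces, $b$ is a one-dimensional subspace, say $b = \mathrm{span}\{w\}$ with $w \not\in Y$, and $Y \lor b$ is the direct sum $Y \oplus b$. By the definition of mutual coherence of $\mathcal{L}(V)$ applied to $X$, choose a complement $X' \in \mathcal{L}(V)$ with $X \land X' = \bot$, $X \lor X' = \top$, and $|\langle u, u'\rangle| \le \epsilon$ for all unit $u \in X$, $u' \in X'$. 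Intuitively $X'$ is ``almost orthogonal'' to $X$; the idea is to take the component of $w$ that is not already in $X$ and compare adding it to $X$ versus adding $b$ to $Y$.

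The key steps are as follows. First I would express each summand $\rho_i(\|\Pi_{Z} v_i\|^2)$ in terms of squared lengths of projections, and use the concavity and monotonicity of $\rho_i$ together with $\rho_i(0)=0$ to reduce~\eqref{eq:downwarddr} to a statement about the underlying quadratic function $f$ in~\eqref{eq:pca}: concretely, for a monotone concave $\rho$ with $\rho(0)=0$ one has $\rho(a+t)-\rho(a) \le \rho(t) \le \rho'(0)\, t$, so a gain at $Y$ of size $t$ costs at most $\rho(t)$, while a gain at $X$ of the ``same direction'' is at least comparable up to the coherence error — this is where the factor $\rho(0)$... wait, rather $\rho'(0)$ enters; I would track the constant carefully, noting that the additive gap in the statement is written with $\rho(0)$, so I must reconcile this (likely the intended reading is $\rho'(0)$, or $\rho$ is assumed $1$-Lipschitz; I would state the precise hypothesis used). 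Second, for the quadratic $f$: since $b \not\le Y$ and $X \le Y$, the subspace $b$ is admissible with respect to $X$ as well, so $b$ itself is a legitimate choice in the inner minimization over $a \in \adm{X}$, $a \le b'$ (taking $b' = b \in \cl{b|Y}$). Thus it suffices to bound $f(Y \lor b) - f(Y) - \big(f(X \lor b) - f(X)\big)$ from above by the claimed gap. Third, I would write $f(Z \lor b) - f(Z) = \sum_i \big(\|\Pi_{Z \lor b} v_i\|^2 - \|\Pi_Z v_i\|^2\big)$ and observe that $\|\Pi_{Z\lor b}v_i\|^2 - \|\Pi_Z v_i\|^2 = \langle v_i, \hat w_Z\rangle^2$ where $\hat w_Z$ is the unit vector in $(Z \lor b)\ominus Z$, the one-dimensional orthogonal complement of $Z$ inside $Z \lor b$. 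So the whole difference is $\sum_i \big(\langle v_i,\hat w_Y\rangle^2 - \langle v_i, \hat w_X\rangle^2\big)$, and the problem becomes: show $\hat w_X$ and $\hat w_Y$ are close (as unit vectors, up to sign) with an error governed by $\epsilon$.

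The main obstacle — and the technical heart of the argument — is precisely this last comparison: bounding $\|\hat w_X \mp \hat w_Y\|$, equivalently $1 - \langle \hat w_X, \hat w_Y\rangle^2$, by $O(\epsilon/(1-\epsilon^2))$. The plan is: write $w = p + q$ with $p \in X$, $q \in X'$ (using $X \oplus X' = \top$, though the sum need not be orthogonal, which is exactly why $\epsilon$ appears); then $\hat w_X$ is the normalization of the component of $w$ orthogonal to $X$, and since $q$ is $\epsilon$-close to orthogonal to $X$, $\hat w_X$ is within $O(\epsilon)$ of $\hat q := q/\|q\|$. Similarly $\hat w_Y$ is the normalization of the component of $w$ orthogonal to $Y \supseteq X$; one shows this component also differs from $\hat q$ by $O(\epsilon)$, using that $Y$ is squeezed between $X$ and $\top = X \oplus X'$ so its ``extra'' part beyond $X$ lives (almost orthogonally) inside $X'$. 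Combining gives $\langle \hat w_X, \hat w_Y\rangle \ge 1 - O(\epsilon^2/(1-\epsilon^2))$, hence for each $i$, $\big|\langle v_i, \hat w_Y\rangle^2 - \langle v_i, \hat w_X\rangle^2\big| \le \|v_i\|^2 \cdot \|\hat w_X - \hat w_Y\|\cdot\|\hat w_X + \hat w_Y\| = O\big(\epsilon/\sqrt{1-\epsilon^2}\big)\,\|v_i\|^2$; summing over $i$ and bookkeeping the constants (absorbing a factor $3$ and the $1/(1-\epsilon^2)$) yields the stated bound $3\epsilon \rho(0)\sum_j \|v_j\|^2/(1-\epsilon^2)$ after applying the concave-reduction from Step~1. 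I would pay the most attention to (a) getting the almost-orthogonal decomposition $w = p+q$ and the resulting perturbation estimates with explicit $\epsilon$-dependence, and (b) confirming that $b$ (or some element of its closure at $Y$) really is admissible at $X$ so that the inner min in~\eqref{eq:downwarddr} is over a nonempty set containing a useful candidate — if $b \le X$ this would fail, but $b \not\le Y \supseteq X$ rules that out.
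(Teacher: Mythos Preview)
There is a genuine gap in the argument. Your choice $b' = b$ in the outer $\max$ of~\eqref{eq:downwarddr} is the wrong one: the quantity $f(Y\lor b)-f(Y)-\bigl(f(X\lor b)-f(X)\bigr)$ is \emph{not} controlled by the mutual coherence $\epsilon$. Take the full vector lattice $\mathcal{L}(\mathbb{R}^3)$ (so $\epsilon=0$), $X=\mathrm{span}\{e_1\}$, $Y=\mathrm{span}\{e_1,e_2\}$, $b=\mathrm{span}\{(1,1,\eta)\}$ with $\eta>0$ small, and a single data vector $v=e_3$. Then $\hat w_Y=e_3$ while $\hat w_X=(0,1,\eta)/\sqrt{1+\eta^2}\approx e_2$, so $f(Y\lor b)-f(Y)=1$ but $f(X\lor b)-f(X)=\eta^2/(1+\eta^2)\approx 0$. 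The gap is $\approx 1$, not $O(\epsilon)=0$. The specific step that fails is the claim that $\hat w_Y$ is $O(\epsilon)$-close to $\hat q$: although $q\in X'$ is almost orthogonal to $X$, the subspace $Y$ may swallow a large part of $X'$, so $q-\Pi_Y q$ can be tiny and point in a direction unrelated to $\hat q$.

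The paper's proof exploits exactly the freedom you discard: it applies the mutual-coherence hypothesis to $Y$ rather than $X$, obtaining a complement $Y'\in\mathcal{L}(V)$ with $Y\land Y'=\bot$, $Y\lor Y'=\top$, and low coherence to $Y$, and then sets $b'=(Y\lor b)\land Y'$. Modularity of heights forces $b'$ to be one-dimensional with $Y\lor b'=Y\lor b$, so $b'\in\cl{b\mid Y}$. Because $b'\subset Y'$, the unit vector in $b'$ is $\epsilon$-close to orthogonal to $Y$, hence simultaneously close to the unit vector $b^\perp\in (Y\lor b)\ominus Y$ \emph{and} close to its own orthogonalization $b''$ against $X\subseteq Y$; comparing $\langle b^\perp,v\rangle^2$ with $\langle b'',v\rangle^2$ then yields the $3\epsilon\|v\|^2/(1-\epsilon^2)$ bound. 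In short, the outer $\max$ over the closure in Definition~\eqref{eq:downwarddr} is precisely what makes the proposition true, and the argument cannot go through with $b'=b$. (Your side remark about $\rho(0)$ versus $\rho'(0)$ is well taken: the proof indeed uses $\rho'(0)$.)
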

%\COMM{SN}{これあってます？証明最後の結論と一致してないように見える．}
%\COMM{TM}{証明最後からコピペする予定で忘れてました．記号の $v$ が地獄}

\subsection{Quantum cut}

Finally, we present an example of a non-monotone bidirectional DR-submodular function.
Let $G = (V, E)$ be a directed graph, and $c: E \to \mathbb{R}_{\ge 0}$ be a weight function.
The cut function is then defined by
$ g(S) = \sum_{(i,j) \in E} c(i,j) 1[ i \in S ] 1[ j \in \bar{S} ] $
where $1[i \in S]$ is the indicator function of $i \in S$ and $\bar{S}$ is the complement of $S$.
This is a non-monotone submodular function.
Maximizing the cut function has application in feature selection problems with diversity~\cite{lin2009graph}.

We extend the cut function to the ``quantum'' setting.
We say that a lattice of vector spaces $\mathcal{L}$ is \emph{ortho-complementable} if $X \in \mathcal{L}$ then $\bar{X} \in \mathcal{L}$ where $\bar{X}$ is the orthogonal complement of $X$.
Let $\{ u_i \}_{i \in V} \subset \mathbb{R}^d$ be vectors assigned on each vertex.
For an ortho-complementable lattice $\mathcal{L}$, the \emph{quantum cut function} $f \colon \mathcal{L} \to \mathbb{R}$ is defined by
\begin{align}
\label{eq:qcut}
	f(X) = \sum_{(i,j) \in E} c(i,j) \| \Pi_X(u_i) \|^2 \| \Pi_{\bar{X}}(v_j) \|^2 .
\end{align}
If $u_i = e_i \in \mathbb{R}^V$ for all $i$, where $e_i$ is the $i$-th unit vector, and $\mathcal{L}$ is the lattice of axis-parallel subspaces of $\mathbb{R}^V$, function~\eqref{eq:qcut} coincides with the original cut function.
Moreover, it carries the submodularity.
\begin{proposition}
\label{prop:qcut}
The function $f$ defined by \eqref{eq:qcut} is a bidirectional DR-submodular function.
\qed
\end{proposition}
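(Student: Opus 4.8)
The plan is to reduce the statement to Proposition~\ref{prop:pca} applied to a single data vector. Write $p_i(X):=\|\Pi_X u_i\|^2$. Since $\Pi_X+\Pi_{\bar X}$ is the identity of $\mathbb{R}^d$, we have $\|\Pi_{\bar X}u_j\|^2=\|u_j\|^2-p_j(X)$, so
\begin{align}
\label{eq:qcut-expand}
f(X)=\sum_{(i,j)\in E}c(i,j)\,p_i(X)\bigl(\|u_j\|^2-p_j(X)\bigr),
\end{align}
where each $p_i$ is a monotone bidirectional DR-submodular function by Proposition~\ref{prop:pca} with the single vector $u_i$; the only facts I use about it are monotonicity, $X\le Y\Rightarrow p_i(X)\le p_i(Y)$, and the exact form of its marginals. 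For $X\in\mathcal{L}$ and $a\in\adm X$ the join $X\lor a$ covers $X$, so $M:=(X\lor a)\land\bar X$ is one-dimensional and lies in $\mathcal{L}$ by ortho-complementability; fixing a unit vector $w_{X,a}$ spanning $M$ (well-defined up to sign) gives the orthogonal splitting $X\lor a=X\oplus M$, whence $p_i(X\lor a)=p_i(X)+\langle w_{X,a},u_i\rangle^2$. Setting $\alpha_i:=\langle w_{X,a},u_i\rangle^2\ge0$ and substituting into \eqref{eq:qcut-expand},
\begin{align}
\label{eq:qcut-marg}
f(X\lor a)-f(X)=\sum_{(i,j)\in E}c(i,j)\Bigl[\alpha_i\bigl(\|u_j\|^2-p_j(X)\bigr)-\alpha_j\,p_i(X)-\alpha_i\alpha_j\Bigr],
\end{align}
which is not sign-definite --- consistent with $f$ being non-monotone, as it specializes to the ordinary cut function.

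For downward DR-submodularity, fix $X\le Y$ and $b\in\adm Y$, put $w:=w_{Y,b}$, and take the witness $b':=\langle w\rangle$. Since $w\perp Y\supseteq X$ and $w\neq0$, one checks: $\langle w\rangle=(Y\lor b)\land\bar Y\in\mathcal{L}$, $Y\lor b'=Y\lor b$, and $b'\in\adm Y$, hence $b'\in\cl{b\mid Y}$; moreover $b'\not\le X$ and the only element $a\in\adm X$ with $a\le b'$ is $b'$ itself, so the inner minimum in \eqref{eq:downwarddr} at $X$ equals $f(X\lor b')-f(X)$; and --- crucially --- $w_{X,b'}=\pm w$, so the coefficients $\alpha_i$ in \eqref{eq:qcut-marg} at $(X,b')$ and at $(Y,b)$ coincide. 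Subtracting these two instances of \eqref{eq:qcut-marg},
\begin{align}
\label{eq:qcut-down}
\bigl[f(Y\lor b)-f(Y)\bigr]-\bigl[f(X\lor b')-f(X)\bigr]=\sum_{(i,j)\in E}c(i,j)\Bigl[\alpha_i\bigl(p_j(X)-p_j(Y)\bigr)-\alpha_j\bigl(p_i(Y)-p_i(X)\bigr)\Bigr],
\end{align}
whose right-hand side is $\le0$ because $p_i$ is monotone and $c,\alpha\ge0$. Thus $f(Y\lor b)-f(Y)\le f(X\lor b')-f(X)$, which, since $b'\in\cl{b\mid Y}$, is exactly \eqref{eq:downwarddr} with $\delta=0$.

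The upward case is symmetric. Fix $X,Y$ and $a\in\adm X$ with $X\lor a\le Y$; for the lattices of interest (in particular $\mathcal{L}(\mathbb{R}^d)$) the join-irreducibles are the lines, so the outer maximum $\max_{b\ge a}$ reduces to $b=a$, and it suffices to produce one $\mathring Y$ with $X\le\mathring Y$, $a\in\adm{\mathring Y}$, $\mathring Y\lor a=Y$ and $f(Y)-f(\mathring Y)\le f(X\lor a)-f(X)$. Put $w:=w_{X,a}$ and take $\mathring Y:=Y\land\overline{\langle w\rangle}$ (ortho-complementability places $\langle w\rangle=(X\lor a)\land\bar X$ and $\overline{\langle w\rangle}$ in $\mathcal{L}$). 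Then $X\le\mathring Y$ (as $X\perp w$), $\mathring Y$ is a hyperplane of $Y$, $a\not\le\mathring Y$ (since $\Pi_{\bar X}(a)=\langle w\rangle\neq0$), $\mathring Y\lor a=Y$, and $w_{\mathring Y,a}=\pm w$; subtracting the instances of \eqref{eq:qcut-marg} at $(X,a)$ and $(\mathring Y,a)$, which again share the same $\alpha_i$, gives
\begin{align}
\label{eq:qcut-up}
\bigl[f(X\lor a)-f(X)\bigr]-\bigl[f(Y)-f(\mathring Y)\bigr]=\sum_{(i,j)\in E}c(i,j)\bigl[\alpha_i\bigl(p_j(\mathring Y)-p_j(X)\bigr)+\alpha_j\bigl(p_i(\mathring Y)-p_i(X)\bigr)\bigr],
\end{align}
which is $\ge0$ because $X\le\mathring Y$. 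This is \eqref{eq:upwarddr} with $\delta=0$, and together with the previous paragraph it shows that $f$ is bidirectional DR-submodular.

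The expansions above and the sign bookkeeping are routine. I expect the main obstacle to be the lattice-theoretic verification surrounding the two witnesses $b'=\langle w\rangle$ and $\mathring Y=Y\land\overline{\langle w\rangle}$: that they belong to $\mathcal{L}$ --- precisely where ortho-complementability is used --- that $b'\in\cl{b\mid Y}$, that $\mathring Y$ satisfies every admissibility constraint in the definitions, and, underpinning the single-direction identity $p_i(X\lor a)=p_i(X)+\langle w_{X,a},u_i\rangle^2$, that admissible joins raise the height by exactly one with $\overline{(\cdot)}$ behaving as codimension, i.e.\ that heights in $\mathcal{L}$ agree with subspace dimensions (automatic when $\mathcal{L}$ is modular, as for $\mathcal{L}(\mathbb{R}^d)$ and for the Boolean lattice underlying the classical cut; in a general ortho-complementable lattice one must also re-examine the outer $\max_{b\ge a}$ over possibly higher join-irreducibles). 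Once these are in place, the device of holding the new direction $w$ fixed while sliding the base subspace reduces both inequalities to monotonicity of the single-vector functions $p_i$.
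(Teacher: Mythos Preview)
Your proof is correct and follows essentially the same route as the paper's. Both arguments pick the orthogonal representative of the admissible element (the paper's ``WLOG $b\in Y^{\perp}$'' is your $b'=\langle w_{Y,b}\rangle$, and the paper's $\mathring{Y}=Y\wedge \mathrm{span}(a)^{\perp}$ with $a\in X^{\perp}$ is your $\mathring{Y}=Y\wedge\overline{\langle w_{X,a}\rangle}$), expand the marginal using the orthogonal splitting, and then compare term by term via the monotonicity of $X\mapsto\|\Pi_X u_i\|^2$; your presentation is somewhat more systematic in first isolating the general marginal identity \eqref{eq:qcut-marg} and more explicit about the lattice-membership and admissibility checks, but the substance is identical.
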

The quantum cut function could be used for subspace selection problem with diversity. 
For example, in a natural language processing problem, the words are usually embedded into a latent vector space $\mathbb{R}^d$~\cite{mikolov2013distributed}.
Usually, we select a subset of words to summarize documents; however, if we want to select a ``meaning'', which is encoded in the vector space as a subspace~\cite{kim2013deriving}, it would be promising to select a subspace. 
In such an example, the quantum cut function~\eqref{eq:qcut} can be used to incorporate the diversity represented by the graph of words.

\section{Algorithms}
\label{sec:algorithms}

%\COMM{SN}{$f(\bot) = 0$ が触れられていない．}
We provide algorithms for maximizing (1) a monotone downward-DR submodular function on the height constraint, which generalizes the cardinality constraint (Section~\ref{sec:cardinality}), (2) a monotone downward DR-submodular function on knapsack constraint (Section~\ref{sec:knapsack}), and (3) a non-monotone bidirectional DR-submodular function (Section~\ref{sec:nonmonotone}).
Basically, these algorithms are extensions of the algorithms for the set lattice. This indicates that our definitions of directional DR-submodularities are natural and useful.

Below, we always assume that $f$ is normalized, i.e., $f(\bot) = 0$.

%\subsection{\textit{p}-lower set constraint}
\subsection{Height constraint}
\label{sec:cardinality}

%\begin{wrapfigure}{r}{0.45\textwidth}
%\begin{minipage}[t]{0.45\textwidth}
\begin{figure}
\begin{minipage}[t]{0.48\textwidth}
\begin{algorithm}[H]
\caption{Greedy algorithm for monotone height constrained problem.}
\label{alg:greedy}
\begin{algorithmic}[1]
\State{$X = \bot$}
\For{$i = 1, \ldots, k$}
\State{Let $a_i \in \displaystyle \argmax_{a \in \adm{X}, X \lor a \in \mathcal{F}} f(X \lor a)$}
\State{$X \leftarrow X \lor a_i$}
\EndFor
\State{\Return $X$}
\end{algorithmic}
\end{algorithm}
\end{minipage} \
\begin{minipage}[t]{0.48\textwidth}
\begin{algorithm}[H]
\caption{Greedy algorithm for monotone knapsack constrained problem.}
\label{alg:knapsack1}
\begin{algorithmic}[1]
\State{$X = \bot$}
\For{$i = 1, 2, \ldots $}
\State{Let $a_i \in  \displaystyle \argmax_{a \in \adm{X}} (f(X \lor a) - f(X))/(c(X \lor a) - c(X))$}
\State{\textbf{if} $c(X \lor a) \le B$ \textbf{then} $X \leftarrow X \lor a_j$}
\EndFor
\State{$a \in \argmax_{a \in \adm{\bot}: c(a) \le B} f(a)$}
\State{\Return $\argmax \{ f(X), f(a) \}$}
\end{algorithmic}
\end{algorithm}
\end{minipage} \ 
\end{figure}

We first consider the height constraint, i.e., $\mathcal{F} = \{ X \in \mathcal{L} : h(X) \le k \}$.
This coincides with the cardinality constraint if $\mathcal{L}$ is the set lattice.
In general, this constraint is very difficult analyze because $h(X \lor a) - h(X)$ can be arbitrary large. 
Thus, we assume that the height function is \emph{$p$-incremental}, i.e., $h(X \lor a) - h(X) \le p$ for all $X$ and $a \in \adm{X}$.
Note that $p = 1$ if and only if $\mathcal{L}$ is modular.

We show that, as similar to the set lattice, the greedy algorithm (Algorithm~\ref{alg:greedy}) achieves $1 - e^{-1/p}$ approximation for the downward DR-submodular maximization problem over the height constraint.
\begin{theorem}
\label{thm:cardinality}
Let $\mathcal{L}$ be a lattice whose height function is $p$-incremental, and $f \colon \mathcal{L} \to \mathbb{R}$ be a downward DR-submodular function with additive gap $\delta$.
Then, Algorithm~\ref{alg:greedy} finds $(1 - e^{-\floor{k/p}/k}, \delta (1 - e^{-\floor{k/p}/k}) k)$-approximate solution of the height constrained monotone submodular maximization problem.% 
\footnote{Algorithm~\ref{alg:greedy} requires solving the non-convex optimization problem in Step~3.
If we can only obtain an $\alpha$-approximate solution in Step~3, the approximation ratio of the algorithm reduces to $(1 - e^{\alpha\floor{k/p}/k)}, \delta (1 - e^{\alpha\floor{k/p}/k}) k)$.}
In particular, on modular lattice with $\delta = 0$, it gives $1 - 1/e$ approximation.
\qed
\end{theorem}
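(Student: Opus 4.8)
The plan is to run the textbook analysis of the greedy algorithm for monotone submodular maximization under a cardinality constraint, with two modifications: ``adding the best element of the optimum'' is replaced by transporting a maximal chain of the optimum up through the lattice, and ``pointwise submodularity'' is replaced by the downward DR-submodular inequality \eqref{eq:downwarddr}. Fix an optimal feasible $O$, so $h(O) \le k$ and $f(O) = \max_{X' \in \mathcal{F}} f(X')$, and let $\bot = X_0 \le X_1 \le \cdots$ be the iterates of Algorithm~\ref{alg:greedy}. Because each iteration joins one admissible element and $h$ is $p$-incremental, $h(X_i) \le ip$; hence for $0 \le i \le \floor{k/p} - 1$ one has $h(X_i) + p \le k$, so every $X_i \lor a$ with $a \in \adm{X_i}$ is feasible and the greedy rule gives $f(X_{i+1}) - f(X_i) \ge f(X_i \lor a) - f(X_i)$ for all such $a$.

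The heart of the argument is the estimate $f(O) - f(X_i) \le k\,(f(X_{i+1}) - f(X_i)) + k\delta$ for $0 \le i \le \floor{k/p}-1$. To prove it, fix a composition series $\bot = O_0 \lessdot O_1 \lessdot \cdots \lessdot O_t = O$; by definition of the height $t \le h(O) \le k$, and picking at each step a join-irreducible below $O_{j+1}$ that is $\le$-minimal among those not below $O_j$ shows $O_{j+1} = O_j \lor a^{(j)}$ with $a^{(j)} \in \adm{O_j}$ (its unique lower cover lies below $O_j$). Put $Z_j := X_i \lor O_j$, so $Z_0 = X_i$, $Z_t = X_i \lor O$, $Z_{j+1} = Z_j \lor a^{(j)}$, and whenever $Z_j \neq Z_{j+1}$ the element $a^{(j)}$ is still admissible to $Z_j$ because its lower cover lies below $O_j \le Z_j$. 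Telescoping and applying \eqref{eq:downwarddr} to each nontrivial step with $X = X_i \le Z_j = Y$ and $b = a^{(j)}$ produces some $a \in \adm{X_i}$ with
\begin{equation*}
f(Z_{j+1}) - f(Z_j) \;\le\; f(X_i \lor a) - f(X_i) + \delta \;\le\; f(X_{i+1}) - f(X_i) + \delta,
\end{equation*}
the last step using greedy optimality together with feasibility of $X_i \lor a$ (since $h(X_i \lor a) \le h(X_i)+p \le k$). Summing over the at most $t \le k$ nontrivial steps and using monotonicity $f(O) \le f(X_i \lor O)$ gives the estimate.

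Finally, rewrite the estimate as $f(O) - f(X_{i+1}) \le (1 - 1/k)\,(f(O) - f(X_i)) + \delta$, unroll it from $i = 0$ with $f(X_0) = 0$ up to $i = \floor{k/p}$, sum the resulting geometric series, and use $1 - x \le e^{-x}$ to get $f(X_{\floor{k/p}}) \ge \bigl(1 - (1-1/k)^{\floor{k/p}}\bigr)(f(O) - k\delta) \ge (1 - e^{-\floor{k/p}/k})(f(O) - k\delta)$; since $f(X_k) \ge f(X_{\floor{k/p}})$ by monotonicity, this is exactly the claimed $(1 - e^{-\floor{k/p}/k},\, \delta(1-e^{-\floor{k/p}/k})k)$ guarantee (when $f(O) < k\delta$ it holds trivially from $f(X_k) \ge 0$), and $p = 1$, $\delta = 0$ yields the $1 - 1/e$ corollary. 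I expect the only real obstacle to be the combinatorial bookkeeping in the chain step: checking that admissibility, feasibility, and the bound $t \le k$ all survive when the composition series of $O$ is pushed up by joining with $X_i$, and that $p$-incrementality rather than modularity is all that is used there; everything else is the standard greedy recursion.
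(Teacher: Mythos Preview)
Your proof is correct and follows essentially the same route as the paper's: build a chain of length at most $h(O)\le k$ from $X_i$ up to $X_i\lor O$ by adjoining admissible join-irreducibles, apply the downward DR inequality~\eqref{eq:downwarddr} at each nontrivial step together with greedy optimality, and unroll the resulting recursion for $\lfloor k/p\rfloor$ iterations. The only cosmetic difference is that the paper builds the chain directly at the level $X_i$ (choosing each $a_j$ admissible to $X_i\lor a_1\lor\cdots\lor a_{j-1}$), whereas you first fix a composition series of $O$ and push it up by $\lor\,X_i$; your version makes the bound $t\le h(O)$ and the feasibility check $h(X_i\lor a)\le (i{+}1)p\le k$ a bit more explicit, but the substance is identical.
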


\subsection{Knapsack constraint}
\label{sec:knapsack}

\begin{figure}
\begin{algorithm}[H]
\caption{Double-greedy algorithm for non-monotone unconstrained problem.}
\label{alg:nonmonotone}
\begin{algorithmic}[1]
\State{$A = \bot$, $B = \top$}
\While{$A \neq B$}
\State{$\mathring{B} \gets \mathrm{argmin}_{\mathring{B}} f(B) - f(\mathring{B})$ where $\mathring{B}$ runs over $A < \mathring{B} < B$ and $h(\mathring{B}) + 1 = h(B)$}
\State{$a \gets \argmax_{a \in \adm{A}, a \le B} f(A \lor a) - f(A)$} 
\State{\textbf{if} $f(A \lor a) - f(A) \ge f(\mathring{B}) - f(B)$ \textbf{then} $A \leftarrow A \lor a$ \textbf{else} $B \leftarrow \mathring{B}$}
\EndWhile
\State{\textbf{return} $A$}
\end{algorithmic}
\end{algorithm}
\end{figure}

Next, we consider the knapsack constrained problem.  
A knapsack constraint on a lattice is specified by a nonnegative modular function (cost function) $c \colon \mathcal{L} \to \mathbb{R}_{\ge 0}$ and nonnegative number (budget) $B \in \mathbb{R}$ such that the feasible region is given by $\mathcal{F} = \{ X \in \mathcal{L} : c(X) \le B \}$. 

In general, it is NP-hard to obtain a constant factor approximation for a knapsack constrained problem even for a distributive lattice~\cite{gottschalk2015submodular}. 
Therefore, we need additional assumptions on the cost function.

We say that a modular function $c \colon \mathcal{L} \to \mathbb{R}$ is \emph{order-consistent} if $c(X \lor a) - c(X) \le c(Y \lor b) - c(Y)$ for all $X, Y \in \mathcal{L}$, $a \in \adm{X}$, $b \in \adm{Y}$, and $a \le b$.
The height function of a modular lattice is order-consistent, because $c(X \lor a) - c(X) = 1$ for all $X \in \mathcal{L}$ and $a \in \adm{X}$; therefore it generalizes the height function.
Moreover, on the set lattice $2^V$, any modular function is order-consistent because there is no join-irreducible $a, b \in 2^V$ such that $a < b$ holds; therefore it generalizes the standard knapsack constraint on sets.

For a knapsack constraint with an order-consistent nonnegative modular function, we obtain a provable approximation ratio.

\begin{theorem}
\label{thm:knapsack}
Let $\mathcal{L}$ be a lattice, $\mathcal{F} = \{ X \in \mathcal{L} : c(X) \le B \}$ be a knapsack constraint where $c \colon \mathcal{L} \to \mathbb{R}_{\ge 0}$ be an order-consistent modular function, $B \in \mathbb{R}_{\ge 0}$, and $f \colon \mathcal{L} \to \mathbb{R}$ be a monotone downward DR-submodular function with additive gap $\delta$.
Then, Algorithm~\ref{alg:knapsack1} gives  $((1 - e^{-1})/2, \delta h(X^*) (1 - e^{-1})/2)$ approximation of the knapsack constrained monotone submodular maximization problem.
\qed
\end{theorem}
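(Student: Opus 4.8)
The plan is to lift the classical analysis of the density greedy for knapsack--constrained monotone submodular maximization (in the style of Sviridenko and of Khuller--Moss--Naor) to the lattice setting, replacing set submodularity by downward DR--submodularity and controlling the knapsack cost through order--consistency. Fix an optimal feasible $X^{*}$, and write $\mathrm{OPT}=f(X^{*})$ and $\mathrm{OPT}'=\mathrm{OPT}-\delta\,h(X^{*})$. Let $\bot=X_{0}<X_{1}<X_{2}<\cdots$ be the chain built by Algorithm~\ref{alg:knapsack1}, where $X_{i}=X_{i-1}\lor a_{i}$ and $a_{i}$ maximizes the marginal density $(f(X_{i-1}\lor a)-f(X_{i-1}))/(c(X_{i-1}\lor a)-c(X_{i-1}))$ over $a\in\adm{X_{i-1}}$. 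As in the classical reduction we may assume every admissible element has incremental cost at most $B$ (an element violating this lies in no feasible solution), so the loop is read as stopping at the first ``overflow'' element $a_{\ell}$: thus $X_{0},\dots,X_{\ell-1}$ are feasible, $X_{\ell}:=X_{\ell-1}\lor a_{\ell}$ has $c(X_{\ell})>B$, and the algorithm returns a subspace of value at least $\max\{f(X_{\ell-1}),\,f(a^{\diamond})\}$, where $a^{\diamond}$ is the best feasible join--irreducible considered in the last line.

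The crux is a one--step progress bound: for every $i<\ell$,
\begin{align}
\mathrm{OPT}'-f(X_{i+1})\ \le\ \Bigl(1-\frac{c(X_{i+1})-c(X_{i})}{B}\Bigr)\bigl(\mathrm{OPT}'-f(X_{i})\bigr).
\end{align}
To prove it, pick a composition series $X_{i}=Y_{0}<Y_{1}<\cdots<Y_{m}=X_{i}\lor X^{*}$ with $Y_{j+1}=Y_{j}\lor b_{j}$, $b_{j}\in\adm{Y_{j}}$, and $m\le h(X^{*})$ (using the modular height equality, $m=h(X_{i}\lor X^{*})-h(X_{i})=h(X^{*})-h(X_{i}\land X^{*})$). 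Applying downward DR--submodularity to $X=X_{i}\le Y_{j}$ and $b=b_{j}$ yields, for each $j$, some $b_{j}'\in\cl{b_{j}\mid Y_{j}}$ and some $a_{j}\in\adm{X_{i}}$ with $a_{j}\le b_{j}'$ such that $f(Y_{j}\lor b_{j})-f(Y_{j})\le f(X_{i}\lor a_{j})-f(X_{i})+\delta$. The key point is that the \emph{same} witness $a_{j}$ controls the cost: since $Y_{j}\lor b_{j}'=Y_{j}\lor b_{j}=Y_{j+1}$ and $c$ is modular, order--consistency applied to $X_{i}\le Y_{j}$, $a_{j}\le b_{j}'$ gives $c(X_{i}\lor a_{j})-c(X_{i})\le c(Y_{j+1})-c(Y_{j})$, hence $\sum_{j}\bigl(c(X_{i}\lor a_{j})-c(X_{i})\bigr)\le c(X_{i}\lor X^{*})-c(X_{i})=c(X^{*})-c(X_{i}\land X^{*})\le B$. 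Since $a_{i+1}$ has maximal marginal density, $f(X_{i}\lor a_{j})-f(X_{i})\le\gamma_{i}\bigl(c(X_{i}\lor a_{j})-c(X_{i})\bigr)$ with $\gamma_{i}:=(f(X_{i+1})-f(X_{i}))/(c(X_{i+1})-c(X_{i}))$. Combining with monotonicity, $f(X^{*})-f(X_{i})\le\sum_{j}\bigl(f(Y_{j+1})-f(Y_{j})\bigr)\le\gamma_{i}B+m\delta\le\gamma_{i}B+\delta\,h(X^{*})$, and rearranging gives the displayed inequality.

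Telescoping from $f(X_{0})=0$, using $\sum_{i<\ell}\bigl(c(X_{i+1})-c(X_{i})\bigr)=c(X_{\ell})>B$ and $\prod_{i}(1-t_{i})\le\exp(-\sum_{i}t_{i})$, we get $\mathrm{OPT}'-f(X_{\ell})\le\mathrm{OPT}'/e$, i.e.\ $f(X_{\ell})\ge(1-e^{-1})\,\mathrm{OPT}'$. Now split $f(X_{\ell})=f(X_{\ell-1})+\bigl(f(X_{\ell-1}\lor a_{\ell})-f(X_{\ell-1})\bigr)$ and bound the overflow marginal: downward DR--submodularity at $X=\bot\le Y=X_{\ell-1}$ with $b=a_{\ell}$, again combined with order--consistency (to certify feasibility of the extracted witness), yields $f(X_{\ell-1}\lor a_{\ell})-f(X_{\ell-1})\le f(a^{\diamond})+\delta$ for a feasible join--irreducible $a^{\diamond}$, the lattice analogue of $f(S\cup v)-f(S)\le f(\{v\})$. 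Hence $\max\{f(X_{\ell-1}),f(a^{\diamond})\}\ge\tfrac12\bigl(f(X_{\ell-1})+f(a^{\diamond})\bigr)\ge\tfrac12\bigl(f(X_{\ell})-\delta\bigr)$, and careful bookkeeping of the $\delta$ terms gives $f(\text{output})\ge\tfrac{1-e^{-1}}{2}\bigl(f(X^{*})-\delta\,h(X^{*})\bigr)$, the claimed $\bigl((1-e^{-1})/2,\ \delta h(X^{*})(1-e^{-1})/2\bigr)$--approximation. For the set lattice this recovers the classical $(1-1/e)/2$ bound, since there order--consistency is automatic, $\delta=0$, and $h$ is cardinality.

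The main obstacle is the one--step progress bound. Downward DR--submodularity asserts only the existence of \emph{some} admissible $a\le b'$ inside \emph{some} closure element, so one must extract a witness that is simultaneously good for the value comparison and, via order--consistency (which needs the comparable pair $a\le b'$), for the cost comparison; reconciling the two through the identity $Y_{j}\lor b_{j}'=Y_{j}\lor b_{j}$ is the delicate step. A secondary point is ensuring the composition series reaching $X_{i}\lor X^{*}$ has at most $h(X^{*})$ steps, so that the accumulated additive error is exactly $\delta\,h(X^{*})$; this is where the modular behaviour of the height function is used, which is why the bound is stated in terms of $h(X^{*})$.
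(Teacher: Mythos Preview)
Your proposal is correct and follows essentially the same route as the paper: derive a one--step progress inequality by combining downward DR--submodularity with order--consistency of $c$ (and modularity of $c$ to bound $\sum_j(c(Y_{j+1})-c(Y_j))\le B$), telescope through the first overflow, then split off the overflow increment via downward DR at $X=\bot$ and compare against the best feasible singleton. One small caveat: to obtain $m\le h(X^{*})$ you invoke the modular height equality $h(X_i\lor X^{*})-h(X_i)=h(X^{*})-h(X_i\land X^{*})$, but Theorem~\ref{thm:knapsack} does not assume $\mathcal{L}$ is modular; the paper instead writes $X^{*}$ as a join of at most $h(X^{*})$ join--irreducibles and argues that the admissible chain from $X_i$ to $X_i\lor X^{*}$ can be taken as a subsequence of those, which avoids the modularity assumption on the lattice.
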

%\COMM{SN}{$\mathbb{L}$ -> $\mathcal{L}$}

\subsection{Non-monotone unconstrained maximization}
\label{sec:nonmonotone}

Finally, we consider the unconstrained non-monotone maximization problem.

The double greedy algorithm~\cite{buchbinder2015tight} achieves the optimal $1/2$ approximation ratio on the unconstrained non-monotone submodular set function maximization problem.
To extend the double greedy algorithm to lattices, we have to assume that the lattice has a finite height. This is needed to terminate the algorithm in a finite step.
We also assume \emph{both} downward DR-submodularity and upward DR-submodularity, i.e., bidirectional DR-submodularity.
Finally, we assume that the lattice is modular. This is needed to analyze the approximation guarantee.

\begin{theorem}
\label{thm:nonmonotone}
Let $\mathcal{L}$ be a modular lattice of finite height, $\mathcal{F} = \mathcal{L}$, and $f\colon \mathcal{L} \to \mathbb{R}_{\ge 0}$ be non-monotone bidirectional DR-submodular function with additive gap $\delta$. 
Then, Algorithm \ref{alg:nonmonotone} gives $(1/3, \delta h(\mathcal{L}))$ approximate solution of the unconstrained non-monotone submodular maximization problem.
\end{theorem}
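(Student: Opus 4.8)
The plan is to follow the double-greedy analysis of Buchbinder, Feldman, Naor, and Schwartz~\cite{buchbinder2015tight}, tracking a suitably ``interpolated'' optimum along the execution of Algorithm~\ref{alg:nonmonotone}, but replacing the symmetric submodular estimates used there by the directional DR-submodular inequalities \eqref{eq:downwarddr} and \eqref{eq:upwarddr}. Fix $X^*\in\argmax_{X\in\mathcal L}f(X)$, let $\bot=A_0\le A_1\le\cdots$ and $\top=B_0\ge B_1\ge\cdots$ be the iterates of the algorithm, and set $O_i=(X^*\lor A_i)\land B_i$. Since $a_i\in\adm{A_{i-1}}$ and $\mathcal L$ is modular, each iteration decreases $h(B)-h(A)$ by exactly one, so the loop runs for exactly $n:=h(\mathcal L)$ iterations and ends with $A_n=B_n=O_n$; also $O_0=X^*$, $f(A_0)=0$, and $f(B_0)=f(\top)\ge 0$. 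I would first record the invariant $A_i\le O_i\le B_i$, that in an ``add'' step the modular law applied with $a_i\le B_{i-1}$ gives $O_i=O_{i-1}\lor a_i$, and that in a ``remove'' step $O_i=O_{i-1}\land\mathring B_i$, which by the modular height equality either equals $O_{i-1}$ or lies one step below it, in which case $O_{i-1}\lor\mathring B_i=B_{i-1}$.

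Next I would prove an \emph{antisymmetry} estimate. Let $g_i^+=\max_{a\in\adm{A_{i-1}},\,a\le B_{i-1}}\bigl(f(A_{i-1}\lor a)-f(A_{i-1})\bigr)$ and $g_i^-=f(\mathring B_i)-f(B_{i-1})$ be the two quantities compared in Step~5, so that $(f(A_i)-f(A_{i-1}))+(f(B_i)-f(B_{i-1}))$ equals $g_i^+$ in an add step and $g_i^-$ in a remove step. Applying downward DR-submodularity \eqref{eq:downwarddr} with $X=A_{i-1}\le Y=\mathring B_i$ and the join-irreducible $b$ realizing $\mathring B_i\lor b=B_{i-1}$, and noting that every inner marginal $f(A_{i-1}\lor a')-f(A_{i-1})$ on its right-hand side has $a'\le b'\le B_{i-1}$ and hence is $\le g_i^+$, one obtains $-g_i^-\le g_i^++\delta$, i.e.\ $g_i^++g_i^-\ge-\delta$; in particular $g_i^+\ge-\delta/2$ in an add step and $g_i^-\ge-\delta/2$ in a remove step.

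The heart of the argument is the per-step estimate $f(O_{i-1})-f(O_i)\le(f(A_i)-f(A_{i-1}))+(f(B_i)-f(B_{i-1}))+2\delta$, which I would prove by cases on how $O_i$ moves. If $O_i=O_{i-1}$ it is immediate from antisymmetry. If the step removes and $O_i=O_{i-1}\land\mathring B_i$ is one step below $O_{i-1}$, write $O_{i-1}=O_i\lor c$ with $c\in\adm{O_i}$ and apply downward DR-submodularity with $X=A_{i-1}\le Y=O_i$ (note $A_{i-1}\le O_{i-1}\land\mathring B_i=O_i$) and $b=c$; since every $c'\in\cl{c\mid O_i}$ satisfies $O_i\lor c'=O_{i-1}\le B_{i-1}$, all inner marginals are again $\le g_i^+$, so $f(O_{i-1})-f(O_i)\le g_i^++\delta\le g_i^-+\delta$ in a remove step. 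If the step adds and $O_i=O_{i-1}\lor a_i$ is one step above $O_{i-1}$, then $a_i\in\adm{O_{i-1}}$, and I would apply \emph{upward} DR-submodularity \eqref{eq:upwarddr} with $X=O_{i-1}$, the join-irreducible $a_i$, and $Y=B_{i-1}\ge O_{i-1}\lor a_i$: by modularity every $\mathring Y$ occurring in the inner minimum is a one-step-down element of $B_{i-1}$ lying above $A_{i-1}$, hence (save for the degenerate case $\mathring Y=A_{i-1}$, which is absorbed into the antisymmetry estimate) one of the candidates minimized over in Step~3, so $f(B_{i-1})-f(\mathring Y)\ge-g_i^-$; this yields $f(O_{i-1}\lor a_i)-f(O_{i-1})\ge-g_i^--2\delta$ and hence $f(O_{i-1})-f(O_i)\le g_i^-+2\delta\le g_i^++2\delta$ in an add step.

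Summing over $i=1,\dots,n$ and telescoping,
\begin{align*}
f(X^*)-f(A_n)&=\sum_{i=1}^n\bigl(f(O_{i-1})-f(O_i)\bigr)\\
&\le\bigl(f(A_n)-f(A_0)\bigr)+\bigl(f(B_n)-f(B_0)\bigr)+2n\delta\le 2f(A_n)+2n\delta,
\end{align*}
using $f(A_0)=0$, $f(B_0)=f(\top)\ge 0$, and $B_n=A_n$; hence $f(A_n)\ge\tfrac13 f(X^*)-\tfrac{2n}{3}\delta\ge\tfrac13 f(X^*)-\delta\,h(\mathcal L)$, which is the claimed $(1/3,\delta h(\mathcal L))$-guarantee. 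The main obstacle is orienting the nested $\min$/$\max$ of \eqref{eq:downwarddr}--\eqref{eq:upwarddr} correctly: in the remove case the marginal to be controlled sits at a \emph{low} element $O_i$, exactly where downward DR-submodularity upper-bounds it, and modularity forces every element appearing in the inner minimum to lie below $B_{i-1}$, so all those marginals are $\le g_i^+$; in the add case one instead needs a \emph{lower} bound on a marginal at $O_{i-1}$, which is what upward DR-submodularity supplies, the point being that every intermediate $\mathring Y$ it produces is, by modularity, a legal one-step peel of $B_{i-1}$ above $A_{i-1}$ and therefore already appears in Step~3. Checking that the inner index sets in \eqref{eq:downwarddr}--\eqref{eq:upwarddr} are non-empty whenever we invoke them is a routine consequence of finite height and modularity; this one-sided use of the DR inequalities, rather than a genuine loss in the algorithm, is what degrades the bound from the $1/2$ of the set case to $1/3$.
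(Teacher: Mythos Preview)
Your proposal is correct and follows essentially the same route as the paper: track $O_i=(X^*\lor A_i)\land B_i$, establish the antisymmetry $g_i^++g_i^-\ge -\delta$ via downward DR-submodularity, bound the per-step damage $f(O_{i-1})-f(O_i)$ against the gain using downward DR in the remove case and upward DR in the add case (with modularity forcing the intermediate $\mathring Y$'s to be valid Step~3 candidates), and telescope. The only difference is bookkeeping: the paper records the antisymmetry as $\alpha_i+\beta_i\ge 0$ and the damage bound with a single $+\delta$, whereas you carry an extra $\delta$ through the add case to get $+2\delta$ per step and hence $\tfrac{2}{3}h(\mathcal L)\delta$ in the final additive term---both comfortably within the claimed $(1/3,\delta\,h(\mathcal L))$ guarantee.
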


\section{Conclusion}
\label{sec:conclusion}

In this paper, we formulated the subspace selection problem as optimization problem over lattices. 
%\COMM{SN}{lattice に冠詞がない}
By introducing new ``DR-submodularities'' on lattices, named \emph{directional DR-submodularities}, we successfully characterize the solvable subspace selection problem in terms of the submodularity.
In particular, our definitions successfully capture the solvability of the principal component analysis and sparse dictionary selection problem.
We propose algorithms with provable approximation guarantees for directional DR-submodular functions over several constraints.

There are several interesting future directions.
Developing an algorithm for the matroid constraint over lattice is important since it is a fundamental constraint in submodular set function maximization problem.
Related with this direction, extending the continuous relaxation type algorithms over lattices is very interesting. Such algorithms have been used to obtain the optimal approximation factors to matroid constrained submodular set function maximization problem.

It is also an interesting direction to look for machine learning applications of the directional DR-submodular maximization other than the subspace selection problem.
The possible candidates include the subgroup selection problem and the subpartition selection problem.

%we find a new solvable subclass of subspace selection problem, \emph{directional DR-submodular maximization problem on lattice}, which contains the principal component analysis and sparse dictionary selection problem.

%We could not obtained an approximation algorithm for matroid constraint; here, the definition of matroid on lattice is itself not clear.
%Therefore, such extension is a promising future direction.

%Another future direction is finding other problems solved by our framework.
%One possible candidate is the problem related with group theory since the set of normal subgroups forms a modular lattice.

%extending continuous relaxation-type algorithms on modular lattices.
%\COMM{SN}{群論とか考えると面白いかもねとかかアピールしておく？}

%{\small
\bibliographystyle{plain}
\bibliography{main}
%}

%\end{document}

\clearpage

\appendix

\begin{center}
{\LARGE Appendix}
\end{center}

\section{Proofs}
\label{sec:proofs}

In this section, we provide proofs omitted in the main body.

\begin{proof}[Proof of Proposition~\ref{prop:upward_implies_strong}]
We use the Birkhoff's representation theorem for distributive lattice.
A set $A \subseteq J(\mathcal{L})$ is a \emph{lower set} if $b \in A$ then $a \in A$ for all $a \le b$.
The lower sets forms a lattice under the inclusion order. We call this lattice \emph{lower set lattice} of $J(\mathcal{L})$.
\begin{theorem}[Birkhoff's representation theorem; see \cite{gratzer2002general}]
Any finite distributive lattice $\mathcal{L}$ is isomorphic to the lower set lattice of $J(\mathcal{L})$.
The isomorphism is given by $\mathcal{L} \ni X \mapsto \{ a \in J(\mathcal{L}) : a \le X \}$.
\qed
\end{theorem}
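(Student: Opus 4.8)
The plan is to verify directly that the stated map $\phi \colon \mathcal{L} \to \mathcal{O}(J(\mathcal{L}))$, $\phi(X) = \{ a \in J(\mathcal{L}) : a \le X \}$ (where $\mathcal{O}(J(\mathcal{L}))$ denotes the lower set lattice of $J(\mathcal{L})$), is a lattice isomorphism, by checking that it is (i) well defined, (ii) a homomorphism for both $\land$ and $\lor$, (iii) injective, and (iv) surjective. Well-definedness is immediate: if $a \le X$ and $a' \le a$ then $a' \le X$, so $\phi(X)$ is a lower set. Preservation of meet is also immediate and uses no distributivity, since $a \le X \land Y$ holds iff $a \le X$ and $a \le Y$, whence $\phi(X \land Y) = \phi(X) \cap \phi(Y)$.

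The two genuinely substantive ingredients are as follows. First, a finiteness fact: every $X \in \mathcal{L}$ satisfies $X = \bigvee \phi(X) = \bigvee \{ a \in J(\mathcal{L}) : a \le X \}$. I would prove this by induction on the height of $X$ (legitimate since $\mathcal{L}$ is finite): if $X$ is join-irreducible the claim is trivial, and otherwise $X = Y \lor Z$ with $Y, Z < X$, so by the inductive hypothesis $Y$ and $Z$ are joins of join-irreducibles lying below them, hence below $X$, which gives the claim. Second, and this is the single place where distributivity is used, I would show that every join-irreducible is \emph{prime}: if $a \in J(\mathcal{L})$ and $a \le X \lor Y$, then $a \le X$ or $a \le Y$. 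Indeed $a = a \land (X \lor Y) = (a \land X) \lor (a \land Y)$ by the distributive law, and join-irreducibility of $a$ forces $a = a \land X$ or $a = a \land Y$, i.e. $a \le X$ or $a \le Y$; a trivial induction then extends this to $a \le \bigvee_i X_i \Rightarrow a \le X_i$ for some $i$.

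With these in hand the remaining steps fall out. Preservation of join: the inclusion $\phi(X) \cup \phi(Y) \subseteq \phi(X \lor Y)$ is trivial, while for the reverse inclusion any $a \in \phi(X \lor Y)$ satisfies $a \le X \lor Y$, so by primeness $a \le X$ or $a \le Y$, i.e. $a \in \phi(X) \cup \phi(Y)$. Injectivity (equivalently, that $\phi$ is an order embedding) follows since $\phi(X) \subseteq \phi(Y)$ yields $X = \bigvee \phi(X) \le \bigvee \phi(Y) = Y$ by the decomposition fact; applying this in both directions shows $\phi(X) = \phi(Y) \Rightarrow X = Y$. Surjectivity: given a lower set $A \subseteq J(\mathcal{L})$, set $X = \bigvee A$; then $A \subseteq \phi(X)$ is clear, and conversely if $b \in J(\mathcal{L})$ with $b \le X = \bigvee A$ then primeness gives $b \le a$ for some $a \in A$, so $b \in A$ because $A$ is a lower set, whence $\phi(X) = A$.

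The main obstacle is really the primeness of join-irreducibles in the second ingredient: this is the one step exploiting the distributive law, and it is what makes the correspondence collapse to a clean bijection (it drives both the nontrivial half of join-preservation and the hard inclusion in surjectivity). Everything else is either purely formal or relies only on finiteness through the decomposition $X = \bigvee \phi(X)$. One small bookkeeping point I would keep track of is the convention that $\bot \notin J(\mathcal{L})$ (consistent with the footnote identifying the join-irreducibles of $2^V$ with the singletons), so that the empty lower set corresponds to $\bot = \bigvee \emptyset$ and the bijection is exact.
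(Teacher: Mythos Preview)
Your proof is correct and is essentially the standard textbook argument for Birkhoff's theorem. Note, however, that the paper does not actually prove this statement: it is quoted as a classical result with a citation and a \qed, and is used only as a tool inside the proof of Proposition~\ref{prop:upward_implies_strong}. So there is no ``paper's own proof'' to compare against; you have simply supplied the omitted background. The two ingredients you isolate---the decomposition $X=\bigvee\phi(X)$ from finiteness, and primeness of join-irreducibles from distributivity---are exactly what the paper implicitly relies on when it invokes the theorem to conclude that $\cl{a\mid X}=\{a\}$ and that the ``complement'' $\mathring{Y}$ with $\mathring{Y}\lor b=Y$ is unique. Your closing remark about the convention $\bot\notin J(\mathcal{L})$ is well taken: the paper's formal definition of join-irreducible technically admits $\bot$, but the footnote and the intended use make clear the standard convention is meant, and your argument handles this cleanly via $\bot=\bigvee\emptyset$.
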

%\COMM{SN}{これだけで下の議論ができるのか分からない．同型の具体形 $\mathcal{L} \ni l \mapsto \{a \in J(\mathcal{L}) \mid a \le l \}$ が要る気がする．引用あるしググれば分かる話だけど．}
This theorem implies that, for any $X \in \mathcal{L}$, the corresponding lower set of $J(\mathcal{L})$ is uniquely determined. 
Therefore, for any $X \in \mathcal{L}$, we have $\cl{a | X} = \{ a \}$ for all $a \in \adm{X}$.

\textbf{(Downward $\Rightarrow$ Strong)} By Birkhoff's representation theorem, we have $\cl{a|X} = \{a\}$. 
Thus the replaced maximum in \eqref{eq:downwarddr} coincides with the minimum.

\textbf{(Upward $\Rightarrow$ Strong)}
By Birkhoff's representation theorem, for any $Y$ and $b \in J(\mathcal{L})$ with $b \le Y$, the element $Y' \in \mathcal{L}$ such that $Y' \lor b = Y$ is uniquely determined (i.e., represent $Y$ as a lower set of $J(\mathcal{L})$ and remove $b$ from the lower set). 
Thus, the replaced minimum in \eqref{eq:upwarddr} coincides with the maximum.
\end{proof}

\begin{proof}[Proofs of Propositions~\ref{prop:pca}, \ref{prop:generalizedpca}]
The downward DR-submodularity follows from Proposition~\ref{prop:ortho}, which is proved below, since the mutual coherence of $\mathcal{L}(\mathbb{R}^d)$ is zero. 
Thus, we here prove the upward DR-submodularity.
To simplify the notation, we prove the case that $f(X) = \rho( \| \Pi_X v\|^2 )$. 
Extension to the general case is easy.

Let $X,Y \in \mathcal{L}$ and $a \in J(\mathcal{L})$ with $X \lor a \le Y$.
Since the height of join-irreducible elements $J(\mathcal{L})$ is one in the vector lattice, the outer max in (\ref{eq:upwarddr}) is negligible.
Let $a' = (X \lor a) \land X^{\perp}$, where $X^{\perp}$ is the orthogonal complement of $X$.
By the modularity of the height, $a'$ is 1-dimensional subspace.
In particular, it is join-irreducible.
Notice that $f(X \lor a) - f(X) = f(X \lor a^{\perp}) - f(X)$.
Since $a' \in X^{\perp}$, we have 
\begin{align}
  f(X \lor a^{\perp}) - f(X) = \rho( \| \Pi_X v\|^2 + \langle a' , v \rangle^2 ) - \rho( \| \Pi_X v\|^2 ).
\end{align}
Here, we identify 1-dimensional subspace $a'$ as a unit vector in the space.
Let $\mathring{Y} = Y \land (a')^{\perp}$.
By using the modularity of the height again, we have $h(\mathring{Y}) + 1 = h(Y)$.
Since $a' \in \mathring{Y}^{\perp}$, we have
\begin{align}
  f(Y) - f(\mathring{Y}) = \rho( \| \Pi_{\mathring{Y}} v\|^2 + \langle a' , v \rangle^2 ) - \rho( \| \Pi_{\mathring{Y}} v\|^2 ).
\end{align}
By the concavity of $\rho$ and $X \subset \mathring{Y}$, we obtain
\begin{align}
  \rho( \| \Pi_X v\|^2 + \langle a' , v \rangle^2 ) - \rho( \| \Pi_X v\|^2 ) \ge \rho( \| \Pi_{\mathring{Y}} v\|^2 + \langle a' , v \rangle^2 ) - \rho( \| \Pi_{\mathring{Y}} v\|^2 ).
\end{align}
This shows the upward DR-submodularity.
\qed

%\COMM{TM}{おねがいします}
%\COMM{SN}{書きました．（ここはproposition 11 関係ないので 「proof of ...」 から消しておきました．）}
% みました！

\end{proof}

\begin{proof}[Proofs of Proposition~\ref{prop:ortho}]
%We only prove Proposition~\ref{prop:ortho} because it generalizes Propositions~\ref{prop:pca} and \ref{prop:generalizedpca} since the mutual coherence of $\mathcal{L}(\mathbb{R}^d)$ is zero.
%
To simplify the notation, we prove the case that $f(X) = \rho( \| \Pi_X v\|^2 )$. 
Extension to the general case is easy.

%\textbf{(Downward)}
Let $\epsilon = \mu(\mathcal{L})$.
Since the join-irreducible elements has height one in this lattice, the additive gap is given by
\begin{align}
	\delta = \sup_{X,Y,b} \left[ f(Y \lor b) - f(Y) - \max_{\tilde b \in \cl{b|Y}, \tilde b \in \adm{X}} f(X \lor \tilde b) - f(X)  \right]
\end{align}
%\COMM{SN}{$\tilde b$の走る範囲がない．}
Let $X \le Y$ and $b$ arbitrary.
By the definition of mutual coherence, there exists $Y'$ that has low coherence with $Y$.
Let $b' = (Y \lor b) \land Y'$. 
Then, by the modularity of the height function, we have $b' \neq \bot$ and it is a join-irreducible element. 
Since $b' \le Y \lor b$, we have $Y \lor b' \le Y \lor b$. 
By comparing the height of $Y \lor b'$ and $Y \lor b$, we have $Y \lor b' = Y \lor b$.

We use $b'$ at the RHS and evaluate
\begin{align}
	\left( f(Y \lor b) - f(Y) \right) - \left( f(X \lor b') - f(X) \right).
\end{align}
Let $b^\bot$ be a unit vector in $Y \lor b$ orthogonal to $Y$.
Note that $b^\bot$ may not be the element of $\mathcal{L}$.
\begin{align}
	f(Y \lor b^\bot) - f(Y) 
    &= \rho( \| \Pi_Y v\|^2 + \langle b^\bot, v \rangle^2 ) - \rho( \| \Pi_Y v \|^2 ) \\
    &\le \rho( \| \Pi_X v\|^2 + \langle b^\bot, v \rangle^2 ) - \rho( \| \Pi_X v \|^2 ).
\end{align}
where the second inequality follows from the concavity of $\rho$ with the monotonicity of the mapping $Y \mapsto \| \Pi_Y v \|^2$.
Thus, 
\begin{align}
	\delta \le \rho( \| \Pi_X v \|^2 + \langle b^\bot, v \rangle^2) -  \rho( \| \Pi_X v \|^2 + \langle b'', v \rangle^2).
\end{align}
where $b''$ is the unit vector proportional to $b' - \Pi_X b'$.
%\COMM{SN}{$b''$の定義がない？}
If $\langle b^\bot, v \rangle^2 \le \langle b'', v \rangle^2$ then, by the monotonicity of $\rho$, we have $\delta \le 0$.
Therefore, we only have to consider the reverse case.
In such case, by the concavity, we have
\begin{align}
	\delta \le \rho'(0) \left( \langle b^\bot, v \rangle^2 - \langle b'', v \rangle^2 \right).
\end{align}
Here, $\rho'(0)$ is the derivative of $\rho$ at $0$.

Let us denote $b' = \alpha b^\bot + \beta t$ where $t$ is a unit vector in $Y$ orthogonal to $b^\bot$. 
Then, by the definition of the mutual coherence, we have $\beta^2 = \langle b', t \rangle^2 \le \epsilon^2$. 
Also, we have $\alpha^2 = 1 - \beta^2 \ge 1 - \epsilon^2$.
By the construction, we have $b'' = \frac{\alpha b^\bot + \beta \tilde t}{\| \alpha b^\bot + \beta \tilde t\|} = \frac{\alpha b^\bot + \beta \tilde t}{\sqrt{\alpha^2 + \beta^2 \|\tilde t\|^2}}$ where $\tilde t = t - \Pi_X t$.
Thus, we have
%\COMM{SN}{この辺の内積にでてくる $t$ は $\tilde t$だと思う．} \COMM{TM}{agree}
\begin{align}
(\alpha^2 + \beta^2 \|\tilde t\|^2) \langle b'', v \rangle^2 
&= (\alpha \langle b^\bot, v \rangle + \beta \langle \tilde t, v \rangle )^2  \\
&= \alpha^2 \langle b^\bot, v \rangle^2 + 2 \alpha \beta \langle b^\bot, v \rangle \langle \tilde t, v \rangle + \beta^2 \langle \tilde t, v \rangle^2.
\end{align}
Therefore, by using $\| \tilde t \|^2 \le \| t \|^2 \le 1$, we have
\begin{align}
\langle b^\bot v \rangle^2 - \langle b'', v \rangle^2 
&\le \frac{2 \alpha \beta \langle b^\bot, v \rangle \langle \tilde t, v \rangle + \beta^2 (\| \tilde t \|^2 \langle b'', v \rangle^2 - \langle \tilde t, v \rangle^2)}{\alpha^2} \\
&\le \frac{2 \alpha \beta + \beta^2}{\alpha^2} \| v \|^2 \le \frac{2 \epsilon \sqrt{1 - \epsilon^2} + \epsilon^2}{1 - \epsilon^2} \|v\|^2 \le \frac{ 3 \epsilon }{1 - \epsilon^2} \|v\|^2.
\end{align}
\begin{comment}
\textbf{(Upward)} The additive gap is given by
\begin{align}
	\delta = \sup_{X,Y,b} \left[ (f(X \lor a) - f(X)) - \min_{Y'} (f(Y) - f(Y')) \right].
\end{align}
Let $X$, $Y$, $a$ arbitrary. 
We write $Y$ as 
\end{comment}
\end{proof}

\begin{proof}[Proof of Lemma~\ref{lem:mutualcoherence}]
Suppose that $\top \in \mathcal{L}$ has dimension $d$.
Let $X \in \mathcal{L}$. 
Then, there exists $\{ u_{i_1}, \ldots, u_{i_k} \} \subset V$ such that any vector $u \in X$ is represented by a linear combination of them.
%\COMM{SN}{(A.14)あたりの議論では， $u_{i_k}$という取り方をしている．}
%= \mathrm{span}(u_{i_1}) \lor \cdots \lor \mathrm{span}(u_{i_k}) \in \mathcal{L}$ where $u_{i_1}, \ldots, u_{i_k} \in V$ are linearly independent.
We construct $X' \in \mathcal{L}$ by selecting maximally independent vectors to $X$ and let $X' = \mathrm{span}(u_{j_1}) \lor \cdots \lor \mathrm{span}(u_{j_{d-k}})$, where $\{ u_{j_1}, \ldots, u_{j_{d-k}} \} \subset V$.
By the dimension theorem of vector space and the fact that $X \land X'$ is the subspace of the intersection of $X$ and $X'$, we have $\mathrm{dim}(X \land X') + \mathrm{dim}(X \lor X') \le \mathrm{dim}(X) + \mathrm{dim}(X')$. 
%\COMM{SN}{加えて，$\mathrm{dim}(X \land' X') \le \mathrm{dim}(X \land X')$ （$\land'$ は今考えてる束での meet で intersection と一致しない方）を使う．}
Here, the left-hand side is $d + \mathrm{dim}(X \land X')$ and the right-hand side is $k + (d - k) = d$.
Therefore, $\mathrm{dim}(X \land X') = 0$. This shows $X \land X' = \bot$. 

We check the condition of the mutual coherence.
Let $u = \sum_p \alpha_p u_{i_p}$ and $u' = \sum_q \beta_q u_{j_q}$ be normalized vectors in $X$ and $X'$. 
Then we have
\begin{align}
	\langle u, u' \rangle = \alpha^\top M \beta,
\end{align}
where $\alpha = (\alpha_1, \ldots, \alpha_k)^\top$, $\beta = (\beta_1, \ldots, \beta_{d-k})^\top$, and $M_{pq} = \langle u_{i_p}, u_{j_q} \rangle$.
Here, $|\alpha^\top M \beta| \le d \epsilon \|\alpha\| \|\beta\|$.
Therefore we prove that $\| \alpha \|$ and $\| \beta \|$ are small.
Since $u$ is normalized, we have
\begin{align}
	1 = \| u \|^2 = \alpha^\top G \alpha \ge \|\alpha \|^2 \lambda_\text{min}(G)
\end{align}
where $G_{pp'} = \langle u_{i_p}, u_{i_{p'}} \rangle$ and $\lambda_\text{min}(G)$ is the smallest eigenvalue of $G$.
Since the diagonal elements of $G$ are one, and the absolute values of the off-diagonal elements are at most $\epsilon$, the Gerschgorin circle theorem~\cite{strang1993introduction} implies that $\lambda_\text{min}(G) \ge 1 - d \epsilon$. 
Therefore, $\| \alpha \| \le 1/\sqrt{1 - d \epsilon}$. Similarly, $\| \beta \| \le 1/\sqrt{1 - d \epsilon}$.
Therefore, $|\langle u, u' \rangle| \le d \epsilon / (1 - d \epsilon)$.
\end{proof}

\begin{proof}[Proof of Proposition~\ref{prop:qcut}]
We first check the downward DR-submodularity.
Take arbitrary subspaces $X, Y$ and $b$ with $X \le Y$ and $b \not \in Y$.
Without loss of generality, we can suppose $b \in Y^\perp$.
To simplify the notation, we use the same symbol $b$ to represent the unit vector in the subspace $b$. %identify the join-irreducible element and a vector 
By a direct calculation,
\begin{align}
  f(Y \lor b) = &\sum_{(i,j) \in E} c(i,j)(\| \Pi_{Y} v_i\|^2 + \langle b, v_i \rangle^2) ( \| \Pi_{\bar{Y}} v_j \|^2  - \langle b, v_j \rangle^2).
\end{align}
Hence,
\begin{align}
  f(Y \lor b) - f(Y) = &\sum_{(i,j) \in E} c(i,j)( \| \Pi_{\bar{Y}} v_j \|^2 \langle b, v_i \rangle^2 - \| \Pi_Y v_i \|^2 \langle b, v_j \rangle^2 -  \langle b, v_i \rangle^2 \langle b, v_j \rangle^2).
\end{align}
Since $b \in X^\perp$, we have
\begin{align}
  f(X \lor b) - f(X) = &\sum_{(i,j) \in E} c(i,j)( \| \Pi_{\bar{X}} v_j \|^2 \langle b, v_i \rangle^2 - \| \Pi_X v_i \|^2 \langle b, v_j \rangle^2 -  \langle b, v_i \rangle^2 \langle b, v_j \rangle^2).
\end{align}
Since $S \le T$, we have $\| \Pi_{\bar{X}} v_j \|^2 \ge \| \Pi_{\bar{Y}} v_{j}\|^2$ and $ \| \Pi_X v_i \|^2 \le  \| \Pi_Y v_i \|^2$.
Hence, each summand in $f(Y \lor b) - f(Y)$ is smaller than that in $f(X \lor b) - f(X)$.
This shows the downward DR-submodularity.

Next, we check the upward DR-submodularity.
Take arbitrary subspaces $X \le Y$ and a vector $a \in X^\perp$ with $X \lor a \le Y$.
Let $\mathring{Y} = Y \land \mathrm{span}(a)^\perp$.
To simplify the notation, we use the same symbol $a$ to represent the unit vector in the subspace $a$. 
Notice that $X \le \mathring{Y}$.
Then, we can show the following equalities by the same argument as the downward DR-submodular case.
\begin{align}
	f(X \lor a) - f(X) = \sum_{(i,j) \in E} ( \| \Pi_{\bar X} v_j \|^2 \langle a, v_j \rangle^2 - \| \Pi_X v_i \|^2 \langle a, v_i \rangle^2 -  \langle a, v_i \rangle^2 \langle a, v_j \rangle^2), 
\end{align}
\begin{align}
	f(Y) - f(\mathring{Y}) = \sum_{(i,j) \in E} ( \| \Pi_{\bar {\mathring{Y}}} v_j \|^2 \langle a, v_j \rangle^2 - \| \Pi_{\mathring{Y}} v_i \|^2 \langle a, v_i \rangle^2 -  \langle a, v_i \rangle^2 \langle a, v_j \rangle^2), 
\end{align}
By comparing the summand, we have $f(X \lor a) - f(X) \ge f(Y) - f(\mathring{Y})$.
This implies the upward DR-submodularity.
\end{proof}

\begin{proof}[Proof of Proposition~\ref{thm:cardinality}]
Since the increment of the height function is bounded by $p$, the algorithm iterates at least $r = \floor{k / p}$ steps.
For all $i = 1, 2, \dots, r-1$, we can prove the following inequality.
Let $X_i$ be the $X$ in Algorithm~\ref{alg:greedy} after the $i$-the iteration.
The optimal solution is denoted by $X^*$. 
We take $\{a_1, a_2, \dots, a_l\} \subset J(\mathcal{L})$ so that $a_{j+1}$ is admissible to $X^*_j := X_i \lor a_1 \lor a_2 \lor \dots \lor a_{j}$ and $X^*_l = X^* \lor X_i$.
Since $X^* = b_1 \lor b_2 \lor \dots \lor b_{l'}$ for some $l' \le h(X^*)$ and we can take $a_i$ as a subsequence of $b_i$, we obtain $l \le h(X^*) \le k$.
  We set $X^*_0$ to $X_i$.
  Then,
  \begin{align}
    f(X^*) \le f(X^*  \lor X_i) &\le f(X_i) + \sum_{j=1}^l \left[ f(X^*_{j}) - f(X^*_{j-1}) \right]\\
    &\le^{(*)} f(X_i) + \sum_{j=1}^l \left[ f(X_{i+1}) - f(X_{i}) + \delta \right] \\
    &\le f(X_i) + k (f(X_{i+1}) - f(X_{i})) + \delta k.
  \end{align}
  In (*), we used the greediness of our algorithm and DR-submodularity:
  Let $w \in \cl{a_j|X^*_{j-1}}$ and $w' \le w$ such that $w' \in \adm{X_{i-1}}$.
  By greediness of the algorithm, $f(X_{i-1} \lor w') - f(X_{i-1}) \le f(X_i) - f(X_{i-1})$.
  Hence, the downward DR-submodularity implies
  \begin{align}
    f(X^*_{j}) - f(X^*_{j-1}) \le f(X_{i+1}) - f(X_i) + \delta.
  \end{align}
Let $\Delta_i = f(X^*) - f(X_i) - \delta k$.
Then, the above inequality implies
\begin{align}
  \Delta_{i+1} \le (1 - \frac{1}{k}) \Delta_{i}.
\end{align}
Hence,
\begin{align}
  \Delta_{k} \le (1 - \frac{1}{k})^r \delta_0  \le e^{-r/k} \Delta_0.
\end{align}
Therefore,
\begin{align}
	f(X) \ge f(X_r) \ge (1 - e^{-\floor{k/p}/k}) f(X^*) - \delta (1 - e^{-\floor{k/p}/k}) k.
\end{align}
%\COMM{SN}{(A.23) と (A.24) がつながっていないので，(A.23)の最後を　$e^{-r/k}\Delta_0$ にする．}
%\COMM{TM}{$k$が残るのは気に食わないけど見た目がグロテスクなのでこっちにする．}
\end{proof}

\begin{comment}
\COMM{SN}{現状だと本文中でも証明でもこの補題は使ってないので削るか，order-consistency は弱化できる（see appendix lem 17）とか言及する？ }
\begin{lemma}
\label{lem:knapsack}
For any modular function $c \colon \mathcal{L} \to \mathbb{R}_{\ge 0}$, there exists weight $\bar c \colon J(\mathcal{L}) \to \mathbb{R}_{\ge 0}$ defined on the join irreducible elements of $\mathcal{L}$ such that
\begin{align}
	\bar c(a) = c(X \lor a) - c(X) 
\end{align}
is satisfied for all $X$ and $a \in \adm{X}$.
\qed
\end{lemma}

\begin{proof}[Proof of Lemma~\ref{lem:knapsack}]
Let $x$ be a join irreducible element.
Let $y$ be the join of all the join irreducible elements below $x$. 
Since $x$ is join irreducible, we have $y < x$ and there is no element between $y$ and $x$ (i.e., $y$ is immediately below of $x$).

Since $x$ is admissible with respect to $X$, by the diamond isomorphism theorem, there is no element between $X \land x$ and $x$. 
Thus, by the uniqueness of $y$, we have $y = X \land x$.
\COMM{SN}{ここは，diamond isomorphism はいらない．なので，このlem.はモジュラ束関係なく成り立つ．

admissibility から $y \le X$．とくに，$y \le X \land x$．一方で，$y$ は $x$ の唯一の直下の元なので，$x \land X$ については以下の2通りしかない．(1)$x$ と一致する． (2) $y$以下．今回は，admissibleなので，(2)のパターンしかありえない．よって，$y = X \land x$が従う．
}

We will show that $c(X + x) - c(X) = c(x) - c(y) = \bar c(x)$ that is independent to the choice of $X$.

Therefore, by the modularity, we have
\begin{align}
	c(X + x) - c(X) = c(x) - c(X \land x) = c(x) - c(y).
\end{align}
\COMM{SN}{Therefore はなんかキモい気がする．}
\end{proof}
\end{comment}

\begin{proof}[Proof of Theorem~\ref{thm:knapsack}]
%\COMM{SN}{相互参照ずれ}
Let $X^*$ be the optimal solution and let $K = h(X^*)$. Let $X_i \lor X^* = X_i \lor x_1^* \lor \cdots \lor x_m^*$ such that each $x_j^*$ is admissible to $X_i \lor x_1^* \lor \cdots \lor x_{j-1}^*$.
Let $X^*_j = X_i \lor x_1^* \lor \cdots \lor x_j^*$.
\begin{comment}
\COMM{SN}{(A.31)で分子に $\delta$ がいるのに greedy で置き換えてしまっている．}
\COMM{TM}{！！ 分子ばらして評価かな}
\COMM{SN}{分子ばらして評価で行けるはずです．order-consistency で分母を変えちゃう前に分子だけDRで変形すれば上手くいくはず．

\begin{align}
f(X^* \lor X_i) - f(X_i) 
&= \sum_j \left( f(X_j^*) - f(X_{j-1}^*) \right) \\
&= \sum_j \left( c(X_j^*) - c(X_{j-1}^*) \right) \frac{ f(X_j^*) - f(X_{j-1}^*) }{ c(X_j^*) - c(X_{j-1}^*) }  \\
& \le \sum_{j} (c(X_j^*) - c(X_{j-1}^*))\frac{ f(X_i \lor y_j^*) - f(X_i) + \delta}{c(X_j^*) - c(X_{j-1}^*)} \\
& \le \sum_{j} (c(X_j^*) - c(X_{j-1}^*))\frac{ f(X_i \lor y_j^*) - f(X_i)}{c(X_j^*) - c(X_{j-1}^*)} + \delta K\\
& \le \sum_{j} (c(X_j^*) - c(X_{j-1}^*))\frac{ f(X_i \lor y_j^*) - f(X_i)}{c(X_i \lor y_j^*) - c(X_i)} + \delta K
\end{align}

地味に $K$も未定義になっている．$K = h(X^*)$ をどこかに入れるか全部置き換える．
}
\COMM{TM}{冒頭にいれた}
\begin{align}
f(X^* \lor X_i) - f(X_i) 
&= \sum_j \left( f(X_j^*) - f(X_{j-1}^*) \right) \\
&= \sum_j \left( c(X_j^*) - c(X_{j-1}^*) \right) \frac{ f(X_j^*) - f(X_{j-1}^*) }{ c(X_j^*) - c(X_{j-1}^*) }  \\
& \le \sum_{j} (c(X_j^*) - c(X_{j-1}^*))\frac{ f(X_i \lor y_j^*) - f(X_i) + \delta}{c(X_i \lor y_j^*) - c(X_i)}.
\end{align}
\end{comment}
\begin{align}
f(X^* \lor X_i) - f(X_i) 
&= \sum_j \left( f(X_j^*) - f(X_{j-1}^*) \right) \\
&= \sum_j \left( c(X_j^*) - c(X_{j-1}^*) \right) \frac{ f(X_j^*) - f(X_{j-1}^*) }{ c(X_j^*) - c(X_{j-1}^*) }  \\
& \le^{(*)} \sum_{j} (c(X_j^*) - c(X_{j-1}^*))\frac{ f(X_i \lor y_j^*) - f(X_i) + \delta}{c(X_j^*) - c(X_{j-1}^*)} \\
& \le \sum_{j} (c(X_j^*) - c(X_{j-1}^*))\frac{ f(X_i \lor y_j^*) - f(X_i)}{c(X_j^*) - c(X_{j-1}^*)} + \delta K\\
& \le^{(**)} \sum_{j} (c(X_j^*) - c(X_{j-1}^*))\frac{ f(X_i \lor y_j^*) - f(X_i)}{c(X_i \lor y_j^*) - c(X_i)} + \delta K
\end{align}
Here, in (*), we used the downward DR-submodularity of $f$: there exists $y_j^*$ such that $y_j^* \le \tilde x_j^*$, $\tilde x_j^* \in \cl{x_j^* \mid X_{j-1}^*}$, $y_j^* \in \adm{X_i}$, and $f(X_j^*) - f(X_{j-1}^*) \le f(X_i \lor y_j^*) - f(X_i)$.
%\COMM{SN}{条件が足りてなくて，$X_i$ に admissible なのと，$(X_j^*) - f(X_{j-1}^*) \le f(X_i \lor y_j^*) - f(X_i)$ が要る．}
Also, in (**), we used the order consistency of $c$: $c(X_j^*) - c(X_{j-1}^*) \ge c(X_i \lor y_j^*) - c(X_{i})$. 
%By Lemma~\ref{lem:knapsack}, $c(X_j^*) - c(X_{j-1}^*) = \bar c(x_j^*)$. Also, by the order consistency, we have $\bar c(x_j^*) = \bar c(\tilde x_j^*) \ge \bar c(y_j^*) = c(X_i \lor y_j^*) - c(X_i)$. 
%\COMM{SN}{order consistency に変える．
%
%今の order-consistency の定義だと，ここはモジュラ束に関係なく成立する．なので，実はこの定理は一般の束で成り立つはず．}
By the greedy algorithm, we have
\begin{align}
%... & \le \sum_{j} (c(X_j^*) - c(X_{j-1}^*))\frac{ f(X_{i+1}) - f(X_i) + \delta }{c(X_{i+1}) - c(X_i)} \\
... & \le \sum_{j} (c(X_j^*) - c(X_{j-1}^*))\frac{ f(X_{i+1}) - f(X_i)}{c(X_{i+1}) - c(X_i)} + \delta K \\
& \le \frac{ B }{c(X_{i+1}) - c(X_i)} \left( f(X_{i+1}) - f(X_i) \right) + K \delta
\end{align}
%where $K = h(X^*)$.
Here, we used the modularity of $c$: $\sum_j (c(X_j^*) - c(X_{j-1}^*)) = c(X^* \lor X_i) - c(X_i) = c(X^*) - c(X^* \land X_i) \le B$.
%\COMM{SN}{ちょっと言葉が足りない気がしている．
%$\sum_j c((c(X_j^*) - c(X_{j-1}^*)))$のところは，結局 $c(X^* \lor X_i) - c(X_i)$ となる．cost の modularity から，この値は $c(X^*) - c(X^* \land X_i)$ と等しい．特に，$B$ で上から抑えられる．（どこでコストのモジュラ性を使ったのかが分かりにくい．）
%}
Therefore, by letting $\Delta_i = f(X^*) - f(X_i) - K \delta$
\begin{align}
	\Delta_{i+1} &\le \left(1 - \frac{c(X_{i+1}) - c(X_i)}{B}\right) \Delta_i \le \exp \left(-\frac{w(X_{i+1}) - c(X_i)}{B} \right) \Delta_i \\
    &\le \exp \left( \frac{-c(X_{i+1})}{B} \right) \Delta_0.
\end{align}
Therefore,
\begin{align}
	f(X_i) \ge (1 - e^{-c(X_i)/B}) f(X^*) - (1 - e^{-c(X_i)/B}) K \delta.
\end{align}
Let $t+1$ be the first step that the budget is exceeded.
\begin{align}
	f(X_t \lor a_{t+1}) \ge (1 - 1/e) f(X^*) - (1 - 1/e) K \epsilon.
\end{align}
Since $f(X_t \lor a_{t+1}) - f(X_t) \le f(b_{t+1})$ for some $b_{t+1}$ that is admissible to the bottom, by outputting the maximum of $X_t$ and the singletons, we obtain $((1 - 1/e)/2, K \epsilon (1 - 1/e)/2)$ approximation.
\end{proof}

\begin{proof}[Proof of Theorem~\ref{thm:nonmonotone}]

In the following analysis of Algorithm \ref{alg:nonmonotone}, the subscript $i$ means the objects after the $i$-th execution of the while loop.
\begin{lemma}
\label{lem:terminate}
During the algorithm, $A_i \le B_i$ holds for all iteration $i$.
In particular, Algorithm \ref{alg:nonmonotone} terminates in $h(\mathcal{L})$ iterations.
\end{lemma}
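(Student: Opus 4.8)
The plan is to establish the invariant $A_i \le B_i$ by induction on the iteration index $i$, and then deduce the iteration bound from a monotone potential argument on $h(B_i) - h(A_i)$. The base case is immediate, since $A_0 = \bot \le \top = B_0$. For the inductive step, assume $A_i \le B_i$ and split on the two branches of Step~5 of Algorithm~\ref{alg:nonmonotone}. If the ``then'' branch fires, then $B_{i+1} = B_i$ and $A_{i+1} = A_i \lor a$ for some $a \in \adm{A_i}$ with $a \le B_i$ (exactly the constraint imposed in Step~4); since $A_i \le B_i$ and $a \le B_i$, their join is $\le B_i$, so $A_{i+1} \le B_{i+1}$. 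If the ``else'' branch fires, then $A_{i+1} = A_i$ and $B_{i+1} = \mathring{B}$ with $A_i < \mathring{B}$ (the constraint in Step~3), hence $A_{i+1} = A_i < \mathring{B} = B_{i+1}$. This proves the invariant.

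Before the termination argument I would record that the two selection steps are never vacuous as long as $A \ne B$. By the invariant, $A \ne B$ implies $A < B$, and since $\mathcal{L}$ has finite height the set $\{W \in \mathcal{L} : W \le B,\ W \not\le A\}$ is nonempty (it contains $B$) and has a minimal element $a$; one checks directly, using only $A \le B$, that $a$ is join-irreducible (if $a = X \lor Y$ with $X, Y < a$, then $X, Y \le A$ by minimality, forcing $a \le A$) and admissible to $A$, and of course $a \le B$. Hence Step~4 always offers a feasible choice. When $h(B) > h(A) + 1$ there is similarly an element $\mathring{B}$ with $A < \mathring{B} < B$ and $h(\mathring{B}) + 1 = h(B)$; when $h(B) = h(A) + 1$ the feasible set of Step~3 is empty, which we read as $f(\mathring{B}) - f(B) = -\infty$, forcing the ``then'' branch. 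Thus the algorithm is well defined at every iteration.

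For termination, set $\Phi_i = h(B_i) - h(A_i)$. By the invariant, and since in a lattice of finite height $A < B$ implies $h(A) < h(B)$ (concatenate a composition series below $A$ with one from $A$ to $B$), we have $\Phi_i \ge 0$, with $\Phi_i = 0$ iff $A_i = B_i$. Each iteration decreases $\Phi$ by exactly one: in the ``then'' branch $B$ is unchanged and, because $\mathcal{L}$ is modular its height function is $1$-incremental, so $h(A_{i+1}) = h(A_i \lor a) = h(A_i) + 1$ (using $a \not\le A_i$); in the ``else'' branch $A$ is unchanged and $h(B_{i+1}) = h(\mathring{B}) = h(B_i) - 1$. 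Since $\Phi_0 = h(\top) - h(\bot) = h(\mathcal{L})$, the loop runs at most $h(\mathcal{L})$ times, after which $\Phi = 0$, i.e.\ $A = B$, and the while loop exits. There is no real obstacle here; the only mildly delicate points are the well-definedness of Steps~3--4 and the appeal to modularity to guarantee $h(A_i \lor a) = h(A_i) + 1$.
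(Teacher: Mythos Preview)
Your proof is correct and follows essentially the same approach as the paper: an induction on the iteration index with a case split on which branch of Step~5 fires, using $a \le B_i$ in the ``then'' case and the constraint $A_i < \mathring{B}$ in the ``else'' case. You go further than the paper by explicitly spelling out the termination argument via the potential $\Phi_i = h(B_i) - h(A_i)$ and by discussing well-definedness of Steps~3--4, whereas the paper only proves the invariant and leaves the ``in particular'' clause implicit.
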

\begin{proof}[Proof of Lemma~\ref{lem:terminate}]
Lemma holds at $i = 0$.
Let us consider the general case.
If $A_i$ is updated, $A_{i+1} = A_i \lor a_i$ and $B_{i+1} = B_i$. Thus, by the induction and $a_i \le B_i$, the lemma holds.
Otherwise, $A_{i+1} = A_i$ and $B_{i+1} = \mathring{B}_i$. 
This holds by the definition of $\mathring{B}_i$.
\end{proof}

%\COMM{SN}{以下添え字を直しました．}
\begin{lemma}
\label{lem:nonmonotonesum}
Let $\alpha_i = f(A_{i-1} \lor a_i) - f(A_{i-1})$, $\beta_i = f(\mathring{B}_i) - f(B_{i-1})$. 
Then $\alpha_i + \beta_i \ge 0$ for all $i$.
\end{lemma}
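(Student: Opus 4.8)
The plan is to prove Lemma~\ref{lem:nonmonotonesum} by a case analysis on whether the $i$-th step is an $A$-step or a $B$-step, and in each case use one of the two directional DR-submodularities to compare the ``forward gain'' $\alpha_i$ against the ``backward loss'' $\beta_i$, both measured against the \emph{current} pair $A_{i-1} \le B_{i-1}$. Concretely, recall that in iteration $i$ the algorithm picks $a_i \in \argmax_{a \in \adm{A_{i-1}},\, a \le B_{i-1}} f(A_{i-1} \lor a) - f(A_{i-1})$ and picks $\mathring{B}_i$ minimizing $f(B_{i-1}) - f(\mathring B_i)$ over $A_{i-1} < \mathring B_i < B_{i-1}$ with $h(\mathring B_i) + 1 = h(B_{i-1})$; then it compares $\alpha_i = f(A_{i-1} \lor a_i) - f(A_{i-1})$ with $f(\mathring B_i) - f(B_{i-1}) = \beta_i$ and moves $A$ up if $\alpha_i \ge \beta_i$, else moves $B$ down. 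Note $\beta_i \le 0$ always (since $f(\mathring B_i) \le f(B_{i-1})$ need not hold in general — but the $\min$ is over a nonempty set by Lemma~\ref{lem:terminate}, and we are bounding the loss), so the content is an inequality of the form $\alpha_i \ge -\beta_i = f(B_{i-1}) - f(\mathring B_i)$.

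First I would set up the comparison element. Since $\mathcal{L}$ is modular and $h(\mathring B_i) + 1 = h(B_{i-1})$, there is a join-irreducible $b$ with $b \in \adm{\mathring B_i}$ and $\mathring B_i \lor b = B_{i-1}$; this is exactly the ``$\mathring Y, Y, b$'' data appearing in the upward DR-submodularity~\eqref{eq:upwarddr}. Because $A_{i-1} \le \mathring B_i$, we want to relate the drop $f(B_{i-1}) - f(\mathring B_i)$ to a gain of the form $f(A_{i-1} \lor a) - f(A_{i-1})$ for some admissible $a \le b$ at $A_{i-1}$. That is precisely what upward DR-submodularity with gap $\delta$ gives: there exists $a \in \adm{A_{i-1}}$ with $a \le b$ (and hence $a \le B_{i-1}$, so $a$ is feasible in the $\argmax$ defining $a_i$) such that
\begin{align}
f(A_{i-1} \lor a) - f(A_{i-1}) \ge f(B_{i-1}) - f(\mathring B_i) - \delta.
\end{align}
Wait — but the lemma as stated claims $\alpha_i + \beta_i \ge 0$ exactly, with no $\delta$ term. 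So in this lemma the intended reading must be that $\delta$-errors are collected separately in the main proof of Theorem~\ref{thm:nonmonotone} (the theorem loses $\delta h(\mathcal{L})$), and this particular lemma is stated for the $\delta = 0$ core; alternatively the lemma should read $\alpha_i + \beta_i \ge -\delta$. I would state and prove it as $\alpha_i + \beta_i \ge -\delta$, which is what the argument actually yields, and remark that summing the $\delta$'s over the $\le h(\mathcal{L})$ iterations produces the additive term in the theorem.

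Given that, the proof is short: by the greedy choice of $a_i$ we have $\alpha_i = f(A_{i-1}\lor a_i) - f(A_{i-1}) \ge f(A_{i-1}\lor a) - f(A_{i-1})$ for the $a$ produced by upward DR-submodularity above, hence $\alpha_i \ge f(B_{i-1}) - f(\mathring B_i) - \delta = -\beta_i - \delta$, i.e. $\alpha_i + \beta_i \ge -\delta$. The one point needing care — and the main obstacle — is checking that the element $a$ delivered by the upward DR-submodularity inequality is genuinely admissible with respect to $A_{i-1}$ and satisfies $a \le B_{i-1}$, so that it is in the feasible set of line~4 of Algorithm~\ref{alg:nonmonotone} and the greedy inequality applies; this uses $a \le b$, $\mathring B_i \lor b = B_{i-1}$, $A_{i-1} \le \mathring B_i \le B_{i-1}$, and the definition of admissibility, together with modularity to ensure the relevant heights behave ($h(A_{i-1}\lor a) = h(A_{i-1})+1$). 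I would also double-check the degenerate edge cases where $\adm{A_{i-1}}\cap\{a : a\le B_{i-1}\}$ could be empty or where $A_{i-1} = B_{i-1}$ (then the loop has already terminated, so no iteration $i$ is in question), and the outer $\max_{b\ge a}$ in~\eqref{eq:upwarddr} which is harmless since we only need existence of one good $a$ for one choice of $b$.
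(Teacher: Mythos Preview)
Your overall structure---compare the loss $f(B_{i-1})-f(\mathring B_i)$ to a marginal gain at $A_{i-1}$ and then invoke the greedy choice of $a_i$---is exactly the paper's. There is also no need for a case split on whether step $i$ is an $A$-step or a $B$-step: $a_i$ and $\mathring B_i$ are both computed before that decision, so the inequality $\alpha_i+\beta_i\ge 0$ (or $\ge -\delta$) is about these quantities alone.

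The genuine gap is that you invoke the \emph{wrong} direction of DR-submodularity. Upward DR-submodularity~\eqref{eq:upwarddr} is a statement \emph{for every given} $a\in\adm{X}$ with $X\lor a\le Y$: it lower-bounds $f(X\lor a)-f(X)$ by $\max_{b\ge a}\min_{\mathring Y}\bigl(f(Y)-f(\mathring Y)\bigr)-\delta$. It does not produce an admissible $a$ from a given pair $(\mathring B_i,b)$; your sentence ``there exists $a\in\adm{A_{i-1}}$ with $a\le b$ such that \ldots'' is not what~\eqref{eq:upwarddr} says. The tool that does deliver this is \emph{downward} DR-submodularity~\eqref{eq:downwarddr}: taking $X=A_{i-1}\le Y=\mathring B_i$ and any $b\in\adm{\mathring B_i}$ with $\mathring B_i\lor b=B_{i-1}$ gives
\[
f(B_{i-1})-f(\mathring B_i)\ \le\ \max_{b'\in\cl{b\mid \mathring B_i}}\ \min_{a\in\adm{A_{i-1}},\,a\le b'}\bigl(f(A_{i-1}\lor a)-f(A_{i-1})\bigr)+\delta.
\]
Every $a$ appearing in the inner minimum satisfies $a\le b'\le B_{i-1}$, hence is feasible in line~4 of Algorithm~\ref{alg:nonmonotone}; therefore the right-hand side is at most $\alpha_i+\delta$, and $\alpha_i+\beta_i\ge -\delta$ follows. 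This is precisely the paper's argument. Your observation that the inequality naturally carries a $-\delta$ (rather than $\ge 0$ on the nose) is correct; the paper's written proof suppresses this $\delta$, and the additive $\delta h(\mathcal L)$ in Theorem~\ref{thm:nonmonotone} is accounted for in Lemma~\ref{lem:damage}.
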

\begin{proof}[Proof of Lemma~\ref{lem:nonmonotonesum}]
By the downward DR-submodularity,
\begin{align}
  f(B_{i-1}) - f(\mathring{B_{i}}) \le \max_{b: \mathring{B}_{i} \lor b = B_{i-1}} \min_{a \le b} f(A_{i-1} \lor a) - f(A_{i-1}).
\end{align}
Any $a$ in the above minimum satisfies $a \le B_{i-1}$ because of $a \le b$ and $b \le B_{i-1}$. 
By the definition of $a_{i}$, the right-hand-side is bounded by $f(A_{i-1} \lor a_i) - f(A_{i-1})$.
Combining these two inequality yields what we want to prove.
\end{proof}

Let $\mathrm{OPT}$ be an optimal solution and $\mathrm{OPT}_i = (\mathrm{OPT} \lor A_i) \land B_i$. 

\begin{lemma}
\label{lem:damage}
\begin{align}
\label{eq:damage}
f(\mathrm{OPT}_{i-1}) - f(\mathrm{OPT}_{i}) \le  f(A_i) + f(B_i) - f(A_{i-1}) - f(B_{i-1}) + \delta.
\end{align}
\end{lemma}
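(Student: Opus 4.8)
The plan is to prove \eqref{eq:damage} by a case analysis on the $i$-th iteration of Algorithm~\ref{alg:nonmonotone}: either $A$ is updated (equivalently $\alpha_i\ge\beta_i$) or $B$ is updated (equivalently $\alpha_i<\beta_i$). This follows the structure of the standard double-greedy analysis, with set operations replaced by lattice operations and submodularity replaced by directional DR-submodularity. First I would pin down $\mathrm{OPT}_i=(\mathrm{OPT}\lor A_i)\land B_i$ in each case. If $A$ is updated, then $A_i=A_{i-1}\lor a_i$, $B_i=B_{i-1}$, and since $a_i\le B_{i-1}$ the modular law gives $\mathrm{OPT}_i=\mathrm{OPT}_{i-1}\lor a_i$; hence $\mathrm{OPT}_{i-1}\le\mathrm{OPT}_i$, the left side of \eqref{eq:damage} equals $-(f(\mathrm{OPT}_{i-1}\lor a_i)-f(\mathrm{OPT}_{i-1}))$, and the right side equals $\alpha_i+\delta$. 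If $B$ is updated, then $A_i=A_{i-1}$, $B_i=\mathring{B}_i$, and $\mathrm{OPT}_i=\mathrm{OPT}_{i-1}\land\mathring{B}_i\le\mathrm{OPT}_{i-1}$, so the left side of \eqref{eq:damage} is a forward marginal of $f$ at $\mathrm{OPT}_i$ and the right side equals $\beta_i+\delta$. Throughout I would use $A_{i-1}\le\mathrm{OPT}_{i-1}\le B_{i-1}$ (Lemma~\ref{lem:terminate} and the definition of $\mathrm{OPT}_{i-1}$), the inequality $\alpha_i+\beta_i\ge0$ (Lemma~\ref{lem:nonmonotonesum}), and the case hypothesis; together these give $\alpha_i\ge0$ when $A$ is updated and $\beta_i\ge0$ when $B$ is updated.

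In the case that $A$ is updated, the degenerate sub-cases $\mathrm{OPT}_{i-1}=A_{i-1}$ and $a_i\le\mathrm{OPT}_{i-1}$ make the left side of \eqref{eq:damage} equal to $-\alpha_i$ and $0$ respectively, both at most $\alpha_i+\delta$. Otherwise $a_i\not\le\mathrm{OPT}_{i-1}$, and $a_i\in\adm{\mathrm{OPT}_{i-1}}$ because every element strictly below $a_i$ lies below $A_{i-1}\le\mathrm{OPT}_{i-1}$. I would then apply upward DR-submodularity with $X=\mathrm{OPT}_{i-1}$, $a=a_i$, and a suitable $Y$ above $\mathrm{OPT}_{i-1}\lor a_i$, which lower-bounds $f(\mathrm{OPT}_{i-1}\lor a_i)-f(\mathrm{OPT}_{i-1})$ by $f(B_{i-1})-f(\mathring{Y})-\delta$ for some $\mathring{Y}$ with $\mathrm{OPT}_{i-1}\le\mathring{Y}$ and $h(\mathring{Y})+1=h(B_{i-1})$ (the latter because admissible joins raise the height by exactly one on a modular lattice). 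Any such $\mathring{Y}$ is one of the elements over which $\mathring{B}_i$ is minimized, so $f(B_{i-1})-f(\mathring{Y})\ge f(B_{i-1})-f(\mathring{B}_i)=-\beta_i$. Hence $f(\mathrm{OPT}_{i-1}\lor a_i)-f(\mathrm{OPT}_{i-1})\ge-\beta_i-\delta\ge-\alpha_i-\delta$, i.e.\ the left side of \eqref{eq:damage} is at most $\alpha_i+\delta$.

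In the case that $B$ is updated, the degenerate sub-case $\mathrm{OPT}_{i-1}\le\mathring{B}_i$ gives $\mathrm{OPT}_i=\mathrm{OPT}_{i-1}$ and a left side of $0\le\beta_i+\delta$. Otherwise the modular law gives $\mathrm{OPT}_{i-1}=\mathrm{OPT}_i\lor a'$ for a join-irreducible $a'\in\adm{\mathrm{OPT}_i}$ with $a'\le\mathrm{OPT}_{i-1}\le B_{i-1}$, so the left side of \eqref{eq:damage} equals $f(\mathrm{OPT}_i\lor a')-f(\mathrm{OPT}_i)$. I would apply downward DR-submodularity with $X=A_{i-1}$ (note $A_{i-1}\le\mathrm{OPT}_i$), $Y=\mathrm{OPT}_i$, and $b=a'$, which bounds $f(\mathrm{OPT}_i\lor a')-f(\mathrm{OPT}_i)$ above by $\max_{b'\in\cl{a' | \mathrm{OPT}_i}}\min_{a\in\adm{A_{i-1}},\,a\le b'}(f(A_{i-1}\lor a)-f(A_{i-1}))+\delta$. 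Every $a$ appearing here satisfies $a\le b'\le\mathrm{OPT}_i\lor a'=\mathrm{OPT}_{i-1}\le B_{i-1}$, hence is feasible in the greedy selection of $a_i$, so $f(A_{i-1}\lor a)-f(A_{i-1})\le\alpha_i$. Therefore the left side of \eqref{eq:damage} is at most $\alpha_i+\delta<\beta_i+\delta$.

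The hard part will be the bookkeeping imposed by the $\min$ and $\max$ operators in the definitions of directional DR-submodularity, not any estimate on $f$. For the argument above to go through one must exhibit concrete elements for those operators and check that the relevant index sets are nonempty: in the $A$-update case, a join-irreducible $b\ge a_i$ together with an element $\mathring{Y}$ satisfying $b\in\adm{\mathring{Y}}$, $\mathring{Y}\lor b=B_{i-1}$, and $\mathrm{OPT}_{i-1}\le\mathring{Y}$; and in the $B$-update case, enough of a description of $\cl{a' | \mathrm{OPT}_i}$ to know that for every $b'$ in it there is an element of $\adm{A_{i-1}}$ below $b'$. Both are existence statements about relative complements and covering elements inside the interval under consideration, and I would prove them from the modular law and the Jordan--H\"older property of modular lattices (the same facts that make admissible joins raise the height by exactly one). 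Once these structural facts are in place, the inequalities on $f$ follow directly from the DR-submodularity definitions, the extremal choices of $a_i$ and $\mathring{B}_i$, Lemma~\ref{lem:nonmonotonesum}, and the case hypothesis; a secondary point is to confirm that the degenerate sub-cases above exhaust exactly the situations in which the chosen index sets would otherwise be empty.
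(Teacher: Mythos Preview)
Your proposal is correct and follows essentially the same route as the paper's proof: a case split on whether $A$ or $B$ is updated, identification of $\mathrm{OPT}_i$ via the modular law, a height-zero/height-one dichotomy, and then upward DR-submodularity (with $X=\mathrm{OPT}_{i-1}$, $Y=B_{i-1}$) in the $A$-case and downward DR-submodularity (with $X=A_{i-1}$, $Y=\mathrm{OPT}_i$) in the $B$-case, each compared against the algorithm's extremal choices of $\mathring{B}_i$ and $a_i$. You are, if anything, slightly more careful than the paper in isolating the degenerate sub-cases and in flagging the lattice-theoretic existence facts (admissibility of $a_i$ at $\mathrm{OPT}_{i-1}$, existence of an admissible $a'$ with $\mathrm{OPT}_{i-1}=\mathrm{OPT}_i\lor a'$, nonemptiness of the index sets in the $\max/\min$), which the paper handles via the height modularity computation $0\le h(\mathrm{OPT}_{i-1})-h(\mathrm{OPT}_i)\le 1$ and leaves the rest implicit.
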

\begin{proof}[Proof of Lemma~\ref{lem:damage}]
In the following, we call $f(\mathrm{OPT}_{i-1}) - f(\mathrm{OPT}_{i})$ as damage and $f(A_i) + f(B_i) - f(A_{i-1}) - f(B_{i-1})$ as gain.
By the property of the algorithm, the gain is $\max\{\alpha_i, \beta_i\}$.
By the Lemma \ref{lem:nonmonotonesum}, the gain is always non-negative.

%\COMM{SN}{場合分けが怖いので書き直した．}
We first suppose that $\alpha_i \ge \beta_i$, i.e., $A_i$ is updated.
Because of the modular law, we have $\mathrm{OPT}_i = (\mathrm{OPT} \land B_i) \lor A_i$.
Hence, $\mathrm{OPT}_i = \mathrm{OPT}_{i-1} \lor a_i$.
This means that $0 \le h(\mathrm{OPT}_i) - h(\mathrm{OPT}_{i-1}) \le 1$.
If $h(\mathrm{OPT}_i) = h(\mathrm{OPT}_{i-1})$, then 
$\mathrm{OPT}_{i} = \mathrm{OPT}_{i-1}$ and the damage is zero.
If not, the upward DR-submodularity implies
\begin{align}
  f(\mathrm{OPT}_{i}) - f(\mathrm{OPT}_{i-1}) \ge \max_{b \ge a_i} \min_{\mathring{B}} ( f(B_{i-1}) - f(\mathring{B}) ) - \delta,
\end{align}
where $b \in \adm{\mathring{B}}$, $\mathring{B} \lor b = B_{i-1}$, and $\mathrm{OPT}_{i-1} \le \mathring{B}$.
In the definition of $\mathring{B}_i$, the variable $\mathring{B}$ in the algorithm runs over larger set than the inner minimum in the above since $A_{i-1} \subset OPT_{i-1}$ because of the fact $\mathrm{OPT}_i = (\mathrm{OPT} \land B_i) \lor A_i$.
Hence $\min_{\mathring{B}} f(B_{i-1}) - f(\mathring{B}) \ge - \beta_i$.
This means that the damage is bounded by $\beta_i$.

Next, we suppose that $\alpha_i \le \beta_i$, i.e., $B_i$ is updated.
By the modularity of the height, we have
\begin{align}
  h(\mathrm{OPT} \land B_{i-1}) &= h(\mathrm{OPT}) + h(B_{i-1}) - h(\mathrm{OPT} \lor B_{i-1} ), \\
  h( \mathrm{OPT} \land \mathring{B}_{i} ) &= h(\mathrm{OPT}) + h(\mathring{B}_{i}) - h(\mathrm{OPT} \lor \mathring{B}_{i} ).
\end{align}
By subtracting these two inequality, we have
\begin{align}
  0 \le h( \mathrm{OPT} \land B_{i-1} ) -  h( \mathrm{OPT} \land \mathring{B}_{i} ) \le 1,
\end{align}
since $\rank{\mathring{B}_i} + 1 = \rank{B_{i-1}}$ and $\mathrm{OPT} \lor \mathring{B}_{i} \le \mathrm{OPT} \lor B_{i-1}$.
If $\rank{ \mathrm{OPT} \land B_{i-1} } -  \rank{ \mathrm{OPT} \land \mathring{B}_{i} } = 0$, then $\mathrm{OPT} \land B_{i-1} =  \mathrm{OPT} \land \mathring{B}_{i}$.
This means that the damage is zero.
Otherwise, the downward DR-submodularity implies
\begin{align}
	f(\mathrm{OPT}_{i-1}) - f(\mathrm{OPT}_{i}) \le \max_{b} \min_{a'} ( f(A_{i-1} + a') - f(A_{i-1}) ) + \delta,
\end{align}
where $b$ runs over $b \in \adm{\mathrm{OPT}_i}, \mathrm{OPT}_{i-1} = \mathrm{OPT}_i \lor b$, and $a' \le b$ runs over $a' \in \adm{A_{i-1}}$.
In the definition of $a$ in the algorithm, $a$ runs over larger set than the inner minimum in the above since $a' \le B_{i-1}$ because $a \le b$ and $b \le B_{i-1}$.
Therefore, the damage is bounded by $\alpha_i$.
\end{proof}
By summing up the inequality~\eqref{eq:damage}, we have
\begin{align}
	f(\mathrm{OPT}) - f(\mathrm{ALG}) \le 2 f(\mathrm{ALG}) - f(\bot) - f(\top) + \rank{\mathcal{L}} \delta.
\end{align}
By the nonnegativity of $f$, the theorem is proved.
\end{proof}

\section{Experiment}
%\COMM{SN}{本文中に合わせて書き直ました．} \COMM{TM}{みます} \COMM{SN}{とりあえずこれで一通り確認したので，大丈夫なはずです．（と信じたい）}
%\COMM{TM}{(synthetic experiment の場合は現在形でも過去形でも OK)}
We check the difference between (\ref{eq:pca}) and (\ref{eq:generalizedpca}) by a numerical experiment.
In this experiment, we used the data on $\mathbb{R}^3$ whose coordinates are denoted by $(x_1, x_2, x_3)$.
%We adopt a coordinate $(x_1, x_2, x_3)$.
%Each data vector $\{u_i\}_{i \in I}$ independently follows an identical mixture gaussian distribution whose probability distribution function is defined by
We generate 1,000 data vectors $u_i$ ($i = 1, \ldots, 1,000$) each of which independently follows the identical Gaussian mixture distribution given by
\begin{align}
  p(x) = q \mathcal{N}(0, \Sigma_1) + (1 - q) \mathcal{N}(0, \Sigma_2),
\end{align}
where $q = 0.95$. 
Here, $\mathcal{N}(0,\Sigma)$ represents the probability distribution function of the normal distribution with mean zero and covariance matrix $\Sigma$, and $\Sigma_1$ and $\Sigma_2$ are given by
\begin{align}
  \Sigma_1 &= \left(
    \begin{array}{ccc}
      1 & 0 & 0 \\
      0 & 0.1 & 0 \\
      0 & 0 & 0.3
    \end{array}
  \right),\\
  \Sigma_2 &= \left(
    \begin{array}{ccc}
      0.1 & 0 & 0 \\
      0 & 1 & 0 \\
      0 & 0 & 0.3
    \end{array}
    \right).
\end{align}
The generated data is plotted in Figure~\ref{fig:experiment1}.
Our task is to find a two-dimensional subspace that captures the characteristics of this data.
%The two classes in this data are classified by using the values of $x_1$ and $x_2$ coordinate.
%The coordinate $x_3$ only contains class-irrelevant noise.
%Thus, we intuitively anticipate that the selected subspace should be $x_1$--$x_2$ plane.

\begin{figure}[t]
      \begin{tabular}{c}
      % 1
      \begin{minipage}{0.33\hsize}
        \begin{center}
          (a)
          \includegraphics[width=4.5cm, clip]{./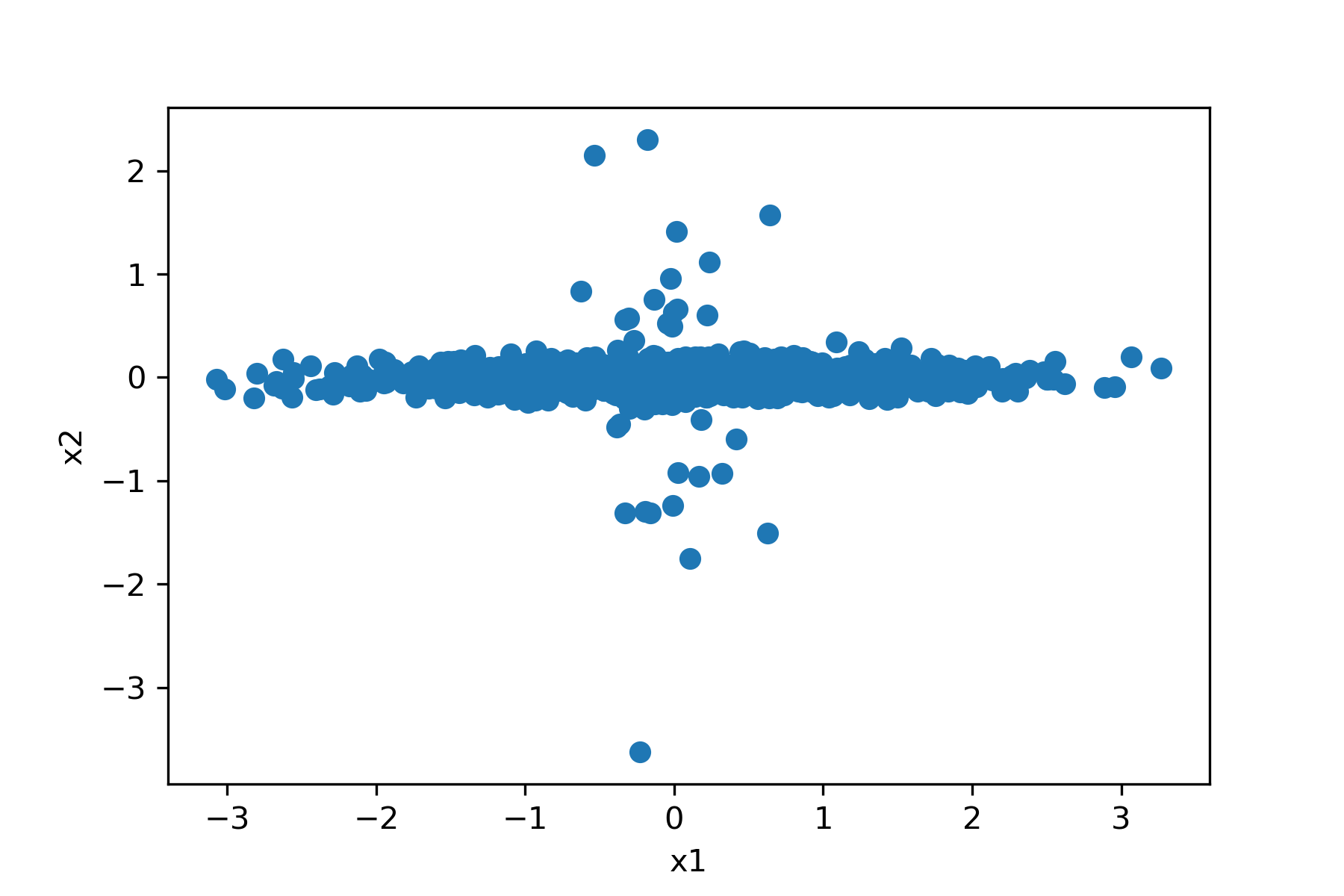}
        \end{center}
      \end{minipage}

      % 2
      \begin{minipage}{0.33\hsize}
        \begin{center}
          (b)
          \includegraphics[clip, width=4.5cm]{./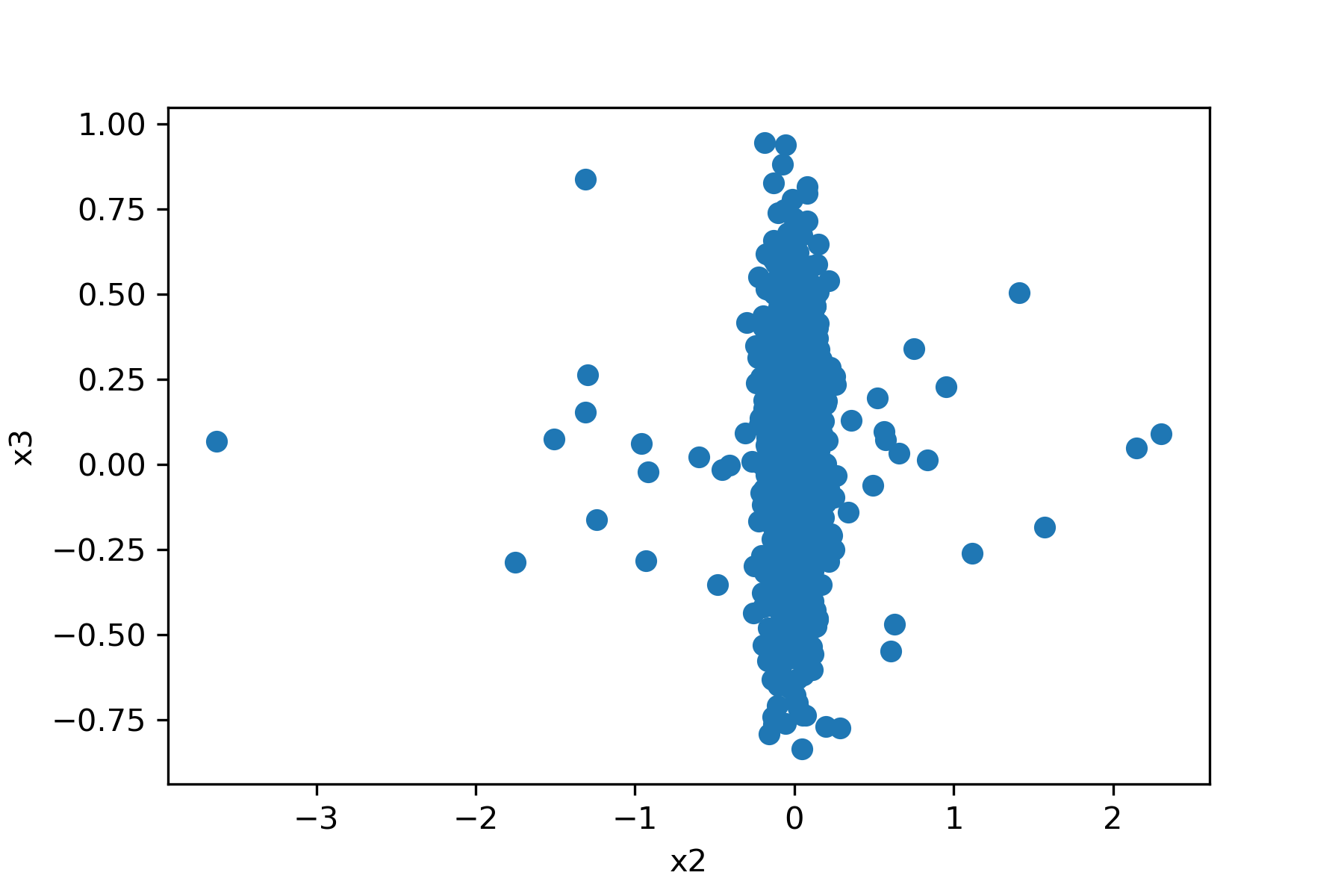}
        \end{center}
      \end{minipage}
      
      % 1
      \begin{minipage}{0.33\hsize}
        \begin{center}
          (c)
          \includegraphics[clip, width=4.5cm]{./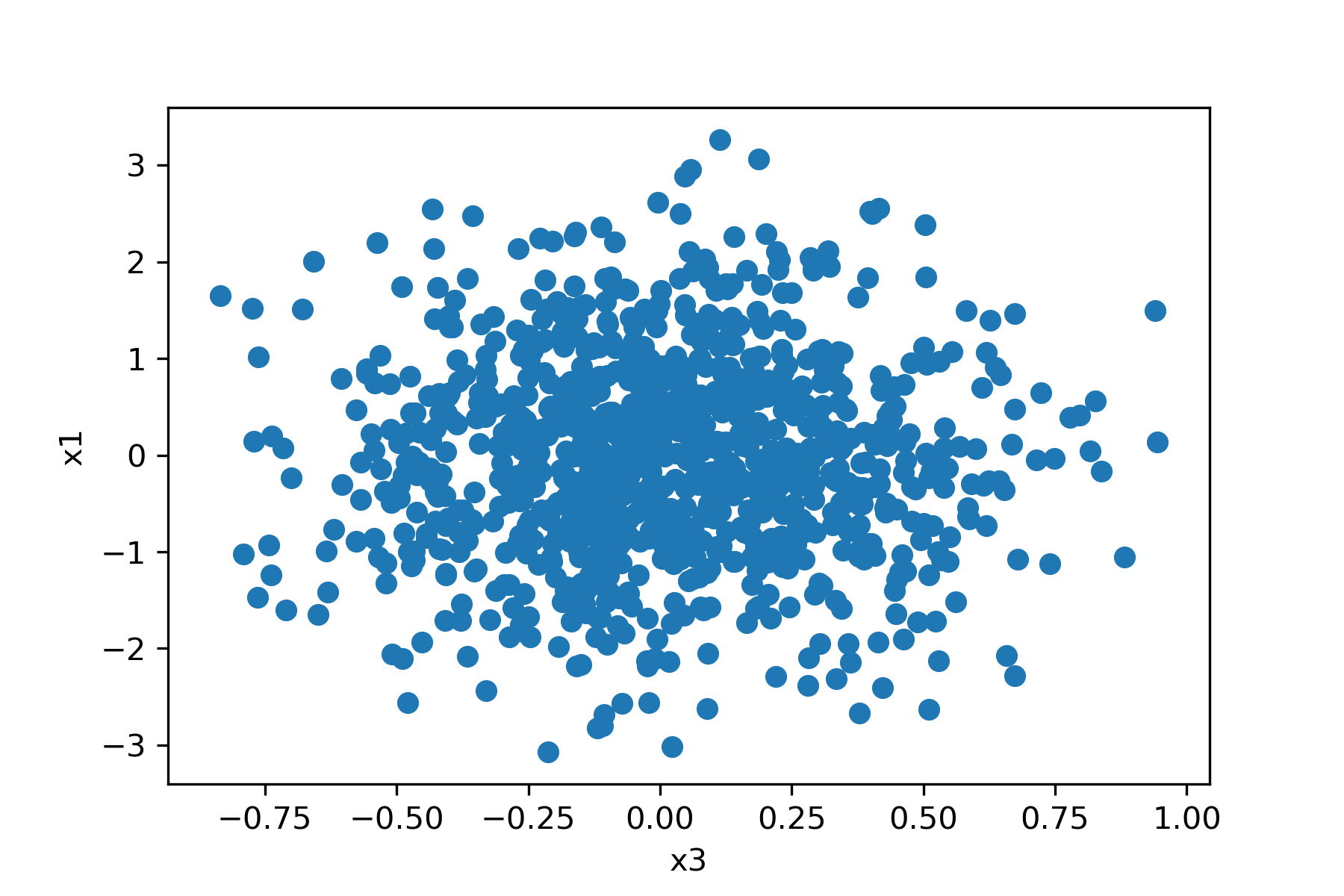}
        \end{center}
      \end{minipage}
    
      \end{tabular}

    \caption{The scatter plots of the data $\{u_i\}_{i \in I}$ used in the experiment. The figure (a) shows the projection of the data to $x_1$-$x_2$ plane, (b) shows to $x_2$-$x_3$ plane, and (c) shows $x_3$-$x_1$ plane.}
    \label{fig:experiment1}
\end{figure}

Since the data is generated from the mixture of two Gaussian distributions in which the first one spreads in $x_1$ direction and the second one spreads in $x_2$ direction, it is natural to we expect to find $x_1$--$x_2$ plane.
However, the ordinal principal component analysis yields an unexpected result as follows.
%The usual principal component analysis (\ref{eq:pca}) yields the following result.
The first principal component is given by $( 0.9999,  0.0036,  0.0026)$, which represents the $x_1$ axis, but the second component is given by $(-0.0025, -0.0219,  0.9997)$, which represents the $x_3$ axis. 
Thus, they spans $x_1$--$x_3$ plane.
%This result is not what we anticipated.
The reason for this unexpected result is that the first class is dominant compared to the second class in the data.
Thus, the ordinary principal component analysis yields the principal components of the first class regardless of the second class.

Now we apply the generalized principal component analysis (\ref{eq:generalizedpca}) to the data.
The function $\rho_i(t)$ is defined by
\begin{align}
  \rho_i(t) = \begin{cases}
    t & (t \le 0.01 \times \|x_i\|^2  ), \\
    0.1 \times (t - 0.01) + 0.01 & (\text{otherwise}).
  \end{cases}
\end{align}
Then, the generalized principal component analysis yields the expected result as follows:
The first component is $(-0.9999, -0.0065, -0.0017)$, which represents the $x_1$ axis, and the second component is $(0.0054,  0.9631, -0.2687)$, which represents the $x_2$ axis.
Thus, they successfully spans $x_1$--$x_2$ plane.
This shows an example that the generalized principal component analysis is more suitable than the ordinal principal component analysis.

The details of the implementation are as follows.
All algorithms are implemented in Python 3.
We generated the data by numpy.random.normal and np.random.binomial.
The PCA was computed by sklearn.decomposition.PCA.
The greedy choice of the vector in the generalized principal component analysis was done by scipy.optimize.differential\_evolution and scipy.optimize.brute.
Both methods yields the almost same result.
In scipy.optimize.brute, the searched grids are located on  $[0,1] \times [-1, 1] \times [-1, 1]$ with width 0.025.

%\input{appendix}
%\input{comment}

%\tableofcontents

\end{document}